\pdfoutput=1

%max length (90 characters):
%      ##########################################################################################
\title{Random feedback weights support learning in deep neural networks}

\author{}
\date{}
%\date{\today}

\documentclass[12pt]{article}

\usepackage{geometry}            % See geometry.pdf to learn the layout options. There are lots.
\geometry{letterpaper}           % ... or a4paper or a5paper or ... 
\usepackage[parfill]{parskip}    % Activate to begin paragraphs with an empty line rather than an indent

\usepackage[scaled]{helvet}
 %% Only if the base font of the document is to be sans serif
\usepackage[T1]{fontenc}

\usepackage[utf8x]{inputenc}
\usepackage[font=footnotesize,labelfont=bf]{caption}
\usepackage[intlimits]{amsmath}
\usepackage{enumerate}
\usepackage{paralist}
\usepackage{amsmath}
\usepackage{amsthm}
\usepackage{amsfonts}
\usepackage{amssymb}
\usepackage{graphicx}
\usepackage{chngcntr}
\usepackage{color}
\usepackage{setspace}

% For citations
\usepackage{natbib}
\setcitestyle{super,open={},close={},citesep={,}}

\newtheorem{thrm}{Theorem}
\newtheorem{lemm}{Lemma}

\usepackage{sectsty}
\sectionfont{\large}

\newcommand{\bs}{\boldsymbol}

\newcommand{\mb}{\mathbf}

\newcommand{\tr}{\mbox{tr}}
\newcommand{\der}[1]{\frac{\mbox{d}}{\mbox{d}#1}}

\newcommand{\vx}{\bs{x}}
\newcommand{\vh}{\bs{h}}
\newcommand{\vy}{\bs{y}}
\newcommand{\ve}{\bs{e}}

\newcommand{\targ}{\tilde{\bs{y}}}

\begin{document}
%\maketitle
\begin{center}
{\Large Random feedback weights support learning \\ in deep neural networks}
\end{center}
\vspace{1cm}
\begin{center}
Timothy P. Lillicrap\textsuperscript{1*}, Daniel Cownden\textsuperscript{2}, Douglas B. Tweed\textsuperscript{3,4}, Colin J. Akerman\textsuperscript{1} 

{\small \textsuperscript{1}Department of Pharmacology, University of Oxford, Oxford, United Kingdom} \\
{\small \textsuperscript{2}Centre for the Study of Cultural Evolution, Stockholm University, Stockholm, Sweden} \\
{\small \textsuperscript{3}Departments of Physiology and Medicine, University of Toronto, Toronto, Canada} \\
{\small \textsuperscript{4}Centre for Vision Research, York University, Toronto, Canada} 

{\small $^{*}$To whom correspondence should be addressed: \\
timothy.lillicrap@pharm.ox.ac.uk \\
colin.akerman@pharm.ox.ac.uk} \\
\end{center}
\vspace{1cm}

%timothy.lillicrap@pharm.ox.ac.uk

%par
%\bf Summary:} 
%par
%e demonstrate that random feedback connections can be used to send useful error signals back through a multi-layered network of neurons.  
%his simple and unexpected fact offers a plausible theory for how error information is communicated from one location to another in the brain.

%We demonstrate that random feedback connections can be used to send useful error signals back through a network, affording a powerful and plausible mechanism for conveying error information from one location to another in the brain.  
%We demonstrate an algorithm that uses random feedback connections to send useful error signals back through a network, affording a powerful and plausible mechanism for conveying error information from one location to another in the brain.  

%200 words fully referenced (300 possible, but prefer 200)
%This paragraph starts with a 2-3 sentence basic introduction to the field; followed by a one-sentence statement of the main conclusions starting 'Here we show' or equivalent phrase; and finally, 2-3 sentences putting the main findings into general context so it is clear how the results described in the paper have moved the field forwards. 

%\doublespacing

\begin{abstract}
The brain processes information through many layers of neurons.
This deep architecture is representationally powerful\citep{hinton1986,hinton2006,hinton2006a,yoshua2007}, but it complicates learning by making it hard to identify the responsible neurons when a mistake is made \citep{hinton1986,seung2003}.
In machine learning, the backpropagation algorithm\citep{hinton1986} assigns blame to a neuron by computing exactly how it contributed to an error. 
To do this, it multiplies error signals by matrices consisting of all the synaptic weights on the neuron's axon and farther downstream. 
This operation requires a precisely choreographed transport of synaptic weight information, which is thought to be impossible in the brain\citep{hinton1986,grossberg1987,crick1989,stork1989,mazzoni1991,seung2003a,pouget2000,harris2008,urbanczik2009,tweed2012}.
Here we present a surprisingly simple algorithm for deep learning, which assigns blame by multiplying error signals by {\em random} synaptic weights. 
We show that a network can learn to extract useful information from signals sent through these random feedback connections. 
In essence, the network learns to learn. 
We demonstrate that this new mechanism performs as quickly and accurately as backpropagation on a variety of problems and describe the principles which underlie its function. 
Our demonstration provides a plausible basis for how a neuron can be adapted using error signals generated at distal locations in the brain, and thus dispels long-held assumptions about the algorithmic constraints on learning in neural circuits.

%By revealing that deep learning works even with random feedback weights, we establish that it is far easier than previously assumed for a network to use neuron-specific feedback. 
%We use simulations to show that this new algorithm learns as quickly and accurately as backpropagation in a variety of problems. 
%It is therefore a biologically plausible principle for how neurons deep in the brain can learn on the basis of error signals generated at distal locations.
%It would therefore be easy for neurons deep in the brain to learn on the basis of error signals generated at distal locations.
%It may therefore form the foundation of a plausible theory for how synapses deep in the brain can learn on the basis of error signals generated at other locations.
%It is therefore a biologically plausible mechanism by which synapses deep in a network can learn on the basis of error signals generated elsewhere in the brain.
\end{abstract}

\pagebreak

Networks in the brain compute via many layers of interconnected neurons\citep{felleman1992,martin2004}.  
To work properly neurons must adjust their synapses so that the network's outputs are appropriate for its tasks. 
A longstanding mystery is how upstream synapses (e.g. the synapse between $\alpha$ and $\beta$ in Fig.~\ref{fig:main}a) are adjusted on the basis of downstream errors (e.g. $\mb{e}$ in Fig.~\ref{fig:main}a).
In artificial intelligence this problem is solved by an algorithm called backpropagation of error \citep{hinton1986}. 
Backprop works well in real-world applications\citep{mnist,schmidhuber2010,hinton2012}, and networks trained with it can account for cell response properties in some areas of cortex\citep{zipser1988,lillicrap2013}.
But it is biologically implausible because it requires that neurons send each other precise information about large numbers of synaptic weights --- i.e. it needs {\em weight transport} \citep{hinton1986,grossberg1987,crick1989,stork1989,harris2008,tweed2012,tweed2008} (Fig.~\ref{fig:main}a, b).
Specifically, backprop multiplies error signals $\mb{e}$ by the matrix $W^{T}$, the transpose of the forward synaptic connections, $W$ (Fig.~\ref{fig:main}b).
This implies that feedback is computed using knowledge of all the synaptic weights $W$ in the forward path.

\par 

For this reason, current theories of biological learning have turned to simpler schemes such as reinforcement learning\citep{williams1992}, and ``shallow'' mechanisms which use errors to adjust only the final layer of a network\cite{yoshua2007,pouget2000}. 
But reinforcement learning, which delivers the same reward signal to each neuron, is slow and scales poorly with network size \citep{seung2003,urbanczik2009,seung2005}. 
And shallow mechanisms waste the representational power of deep networks \citep{hinton2006a,yoshua2007,yoshua2010}.

\par

Here we describe a new deep-learning algorithm that is as fast and accurate as backprop, but much simpler, avoiding all transport of synaptic weight information.
This makes it a mechanism the brain could easily exploit.
It is based on three insights: {\em (i)} The feedback weights need not be exactly $W^{T}$. 
In fact, any matrix $B$ will suffice, so long as on average,

\begin{equation}\label{equ:condition}\mb{e}^{T}WB\mb{e}>0\end{equation}

where $\mb{e}$ is the error in the network's output (Fig.~\ref{fig:main}a).  
Geometrically, this means the teaching signal sent by the matrix, $B\mb{e}$, lies within $90^{\circ}$ of the signal used by backprop, $W^{T}\mb{e}$, i.e. $B$ pushes the network in roughly the same direction as backprop would. 
{\em (ii)} Even if the network doesn’t have this property initially, it can acquire it through learning. 
The obvious option is to adjust $B$ to make equation (\ref{equ:condition}) true, but 
{\em (iii)} another possibility is to do the same by adjusting $W$. 
We will show this can be done very simply, even with a fixed, random $B$ (Fig.~\ref{fig:main}c).  

\begin{figure}
\centering
\includegraphics[width=\textwidth]{./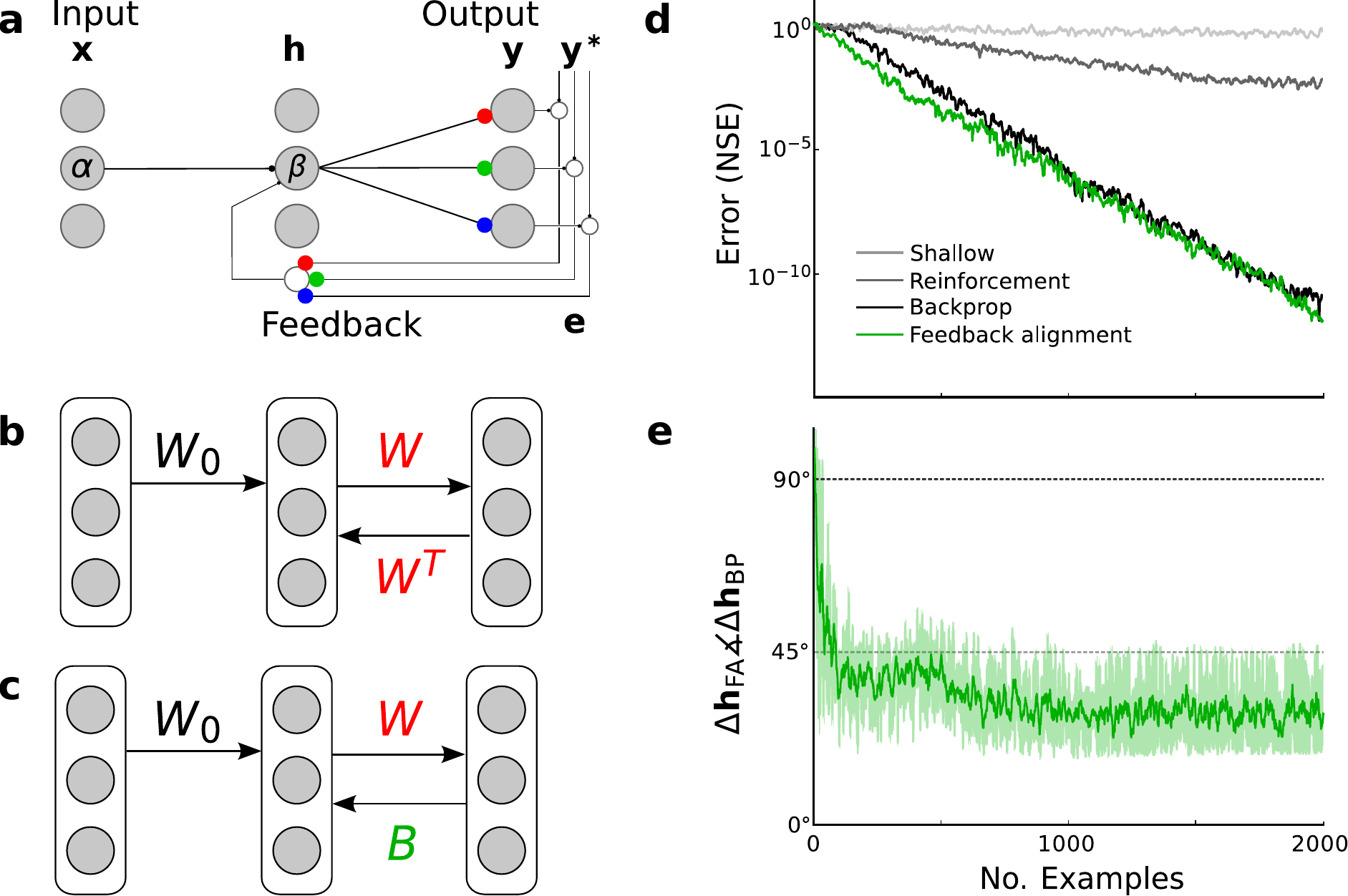}	
\caption{Random feedback weights can deliver useful teaching signals to preceding layers of a neural network.
{\bf a}, The backprop learning algorithm is powerful, but requires biologically implausible transport of individual synaptic weight information.
For backprop, neurons must know each other’s synaptic weights, e.g. the three coloured synapses on the feedback cell at bottom must have weights equal to those of the corresponding coloured synapses on three cells in the forward path.
On a computer it is simple to use the synaptic weights in both forward and backward computations, but synapses in the brain communicate information unidirectionally.
{\bf b}, Implemented in a computer, backprop computes how to change hidden unit activities by multiplying the error vector, $\mb{e}=\mb{y}^{*}-\mb{y}$, by the transpose matrix of the forward weights, i.e., $\Delta\mb{h}_{\mathrm{BP}}=W^{T}\mb{e}$. 
{\bf c}, Our {\em feedback alignment} method uses the counterintuitive observation that learning is still effective if $W^{T}$ is replaced by a matrix of fixed random weights, $B$, so that $\Delta\mb{h}_{\mathrm{\mathrm{FA}}}=B\mb{e}$.
{\bf d}, Four algorithms learn to mimic a linear function: 
`shallow' learning (light gray), reinforcement learning (dark gray), backprop (black), and feedback alignment (green).
NSE is normalized squared error. 
{\bf e}, Angle between the hidden-unit update vector prescribed by feedback alignment and that prescribed by backprop, i.e., $\Delta\mb{h}_{\mathrm{FA}}\measuredangle\Delta\mb{h}_{\mathrm{BP}}$.  
Error bars are two standard deviations for a sliding window of 10 examples.}
\label{fig:main}
	
\end{figure}

\par

We first demonstrate that this mechanism works for a variety of tasks, and then explain why it works.
For clarity we consider a three-layer network of linear neurons (see Methods).
The network's output is $\mb{y}=W\mb{h}$, where $\mb{h}$ is the hidden-unit activity vector, given by $\mb{h}=W_{0}\mb{x}$, where $\mb{x}$ is the input to the network. 
$W_{0}$ is the matrix of synaptic weights from $\mb{x}$ to $\mb{h}$ and $W$ is the weights from $\mb{h}$ to $\mb{y}$. 
The network learns to approximate a linear function, $T$ (for ``target''). 
Its goal is to reduce the squared error, or loss, $\mathcal{L}=\frac{1}{2}\mb{e}^{T}\mb{e}$, where the error $\mb{e}=\mb{y}^{*}-\mb{y}=T\mb{x}-\mb{y}$.

\par

We trained the network using four algorithms (Fig.~\ref{fig:main}d).
If only the output weights, $W$, are adjusted, as in shallow methods, then the loss hardly decreases. 
If a fast variant of reinforcement learning is used to adjust the hidden weights, $W_{0}$, then there is some progress but it is slow. 
In contrast, backprop sends the loss rapidly towards zero.
It adjusts the hidden-unit weights according to the gradient of the loss, $\Delta W_{0}\propto(\partial \mathcal{L}/\partial W_{0})=(\partial\mathcal{L}/\partial\mb{h})(\partial\mb{h}/\partial W_{0})=-(W^{T}\mb{e})\mb{x}^{T}$. 
Thus, backprop adjusts the hidden units according to the vector: $\Delta\mb{h}_{\mathrm{BP}}=W^{T}\mb{e}$.
Here and throughout $\Delta \mb{h}$ denotes the update sent to the hidden layer, rather than the change in the hidden units.
Our new algorithm adjusts $W$ in the same way as backprop ($\Delta W\propto (\partial\mathcal{L}/\partial W)=-\mb{e}\mb{h}^{T}$), but for $\Delta\mb{h}$ it uses a much simpler formula, which needs no information about $W$ or any other synapses but instead sends $\mb{e}$ through a fixed random matrix $B$,

\begin{equation}\label{equ:algorithm}\Delta\mb{h}=B\mb{e}\end{equation}

This algorithm, which we call {\em feedback alignment} drives down the loss as quickly as backprop does (Fig.~\ref{fig:main}d).

\par

The network learns how to learn --- it gradually discovers how to use $B$, which then allows effective modification of the hidden units.
At first, the updates to the hidden layer are not helpful, but they quickly improve by an implicit feedback process that alters $W$ so that $\mb{e}^{T}WB\mb{e}>0$.
To reveal this, we plot the angle between the hidden-unit updates prescribed by feedback alignment and backprop, $\Delta\mb{h}_{\mathrm{FA}}\measuredangle\Delta\mb{h}_{\mathrm{BP}}$ (Fig.~\ref{fig:main}e; see Methods).
Initially the angles average about $90^{\circ}$. 
But they soon shrink, as the algorithm begins to take steps that are closer to those of backprop. 
This alignment of the $\Delta \mb{h}$'s implies that $B$ has begun to act like $W^{T}$. 
And because $B$ is fixed, the alignment is driven by changes in the forward weights $W$. 
In this way, random feedback weights come to transmit useful teaching signals to neurons deep in the network.

\par

Feedback alignment learning also solves nonlinear, real-world problems. 
We ran it on a benchmark classification problem (Fig.~\ref{fig:robust}a), learning to recognize handwritten digits\citep{mnist} (see Methods). 
On this task, backprop brings the mean error on the test set to 2.4\% (average of n=20 runs).
Feedback alignment learns as quickly as backprop, reaches 2.1\% mean error (n=20), and develops similar feature detectors (Supplementary Fig.~\ref{fig:rfs}).
As measured by the angle, $\Delta\mb{h}_{\mathrm{FA}}\measuredangle\Delta\mb{h}_{\mathrm{BP}}$, feedback alignment quickly learns to use the random weights to transmit useful error information to the hidden units (Fig.~\ref{fig:robust}b). 
Even when we randomly remove 50\% of the elements of the $W$ and $B$ matrices, so that neurons in $\mb{h}$ and $\mb{y}$ have a 25\% chance of reciprocal connection, feedback alignment still matches backprop (2.4\% mean error; n=20).  

\par

\begin{figure}
\centering
\includegraphics[width=\textwidth]{./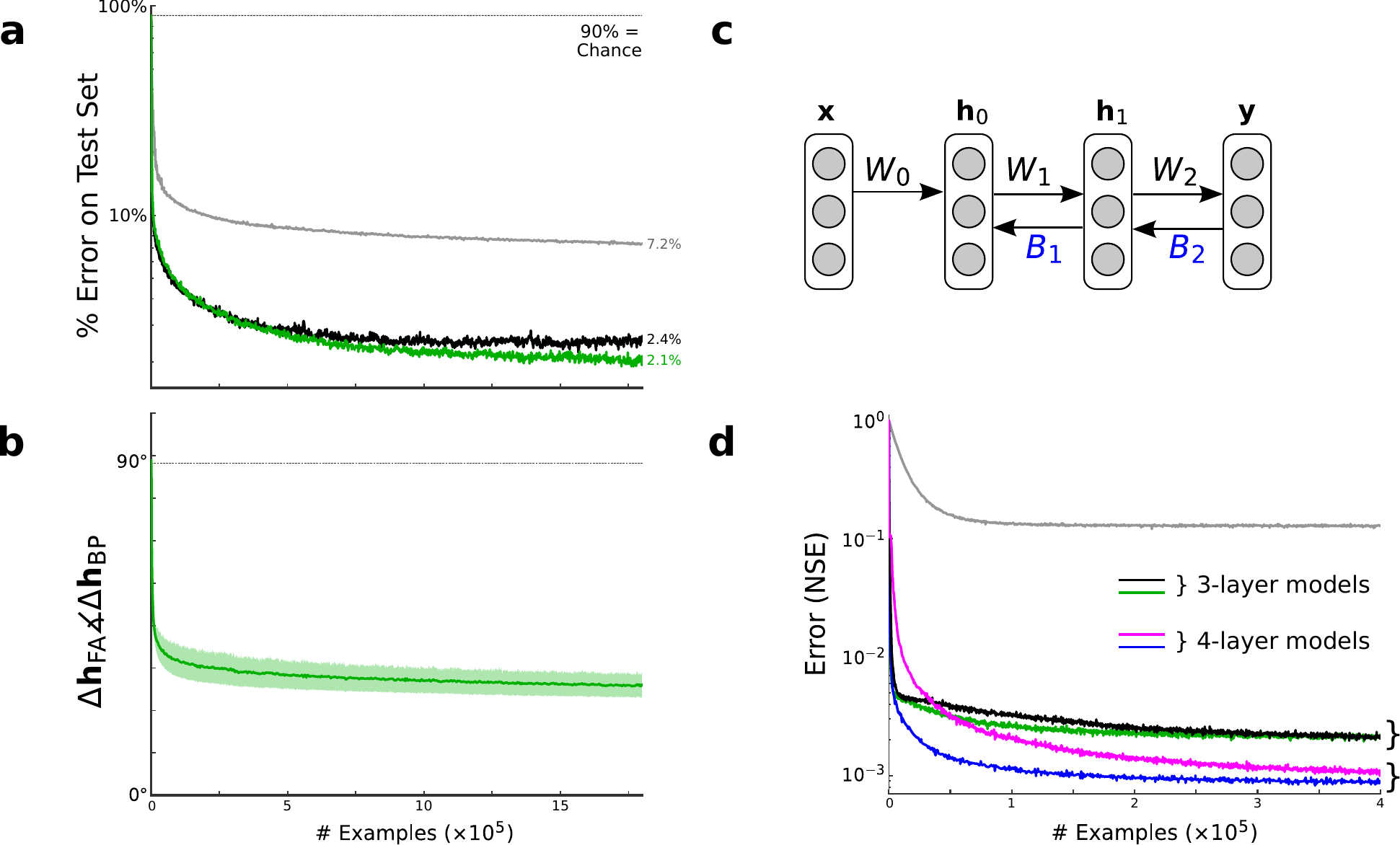}	
\caption{
Feedback alignment solves nonlinear, real-world problems.
{\bf a}, A 784--1000--10 network of logistic units learns to recognize handwritten digits.
Representative performance curves for backprop (black), feedback alignment (green), and shallow learning (light grey) on 10,000 test images.
{\bf b}, Angle between the hidden-unit update made by feedback alignment and that prescribed by backprop, i.e., $\Delta\mb{h}_{\mathrm{FA}}\measuredangle\Delta\mb{h}_{\mathrm{BP}}$.  Error bars are one standard deviation around the time-averaged mean.
{\bf c}, Feedback alignment can train deeper layers via random weights, e.g. $B_{1}$ and $B_{2}$.
{\bf d}, Normalized squared error curves, each an average over 20 trials, for a nonlinear function-approximation task;
three-layer network trained with shallow learning (grey), backprop (black), and feedback alignment (green); 
four-layer network trained with backprop (magenta) and feedback alignment (blue).
}
\label{fig:robust}
\end{figure}

\par 

Some tasks are better performed by networks with more than one hidden layer, but for that we need a learning algorithm that can exploit the extra power of a deeper network\citep{hinton2006,hinton2006a,yoshua2007} (Fig.~\ref{fig:robust}c). 
Backprop assigns blame to a neuron by taking into account {\em all} of its downstream synapses. 
Thus, the update for the first hidden layer in a four-layer network (Fig.~\ref{fig:robust}c) is $\Delta\mb{h}^{0}_{\mathrm{BP}}=W_{1}^{T}((W_{2}^{T}\mb{e})\circ\mb{h}_{1}')$, where $\circ$ is element-wise multiplication, and $\mb{h}_{1}'$ is the derivative of the $\mb{h}_{1}$ activation function. 
With feedback alignment the update is instead, $\Delta \mb{h}^{0}_{\mathrm{FA}}=B_{1}((B_{2}\mb{e})\circ\mb{h}_{1}')$, where $B_{1}$ and $B_{2}$ are random matrices (Fig.~\ref{fig:robust}c).  
On a non-linear function-fitting task (see Methods), both backprop (t-test, $n=20,p=3\times 10^{-12}$; Fig.~\ref{fig:robust}d) and feedback alignment (t-test, $n=20,p=9\times 10^{-13}$; Fig.~\ref{fig:robust}d) deliver better performance with a four-layer network than with a three-layer network.
Thus, the algorithm builds useful feature detectors in the deepest layers of a network by relaying errors via random connections.
This flexibility makes feedback alignment more plausible for the brain: learning proceeds even when error vectors are indiscriminately broadcast via random feedback weights to multiple layers of cells.

\par 
\begin{figure}
\centering
\includegraphics[width=0.5\textwidth]{./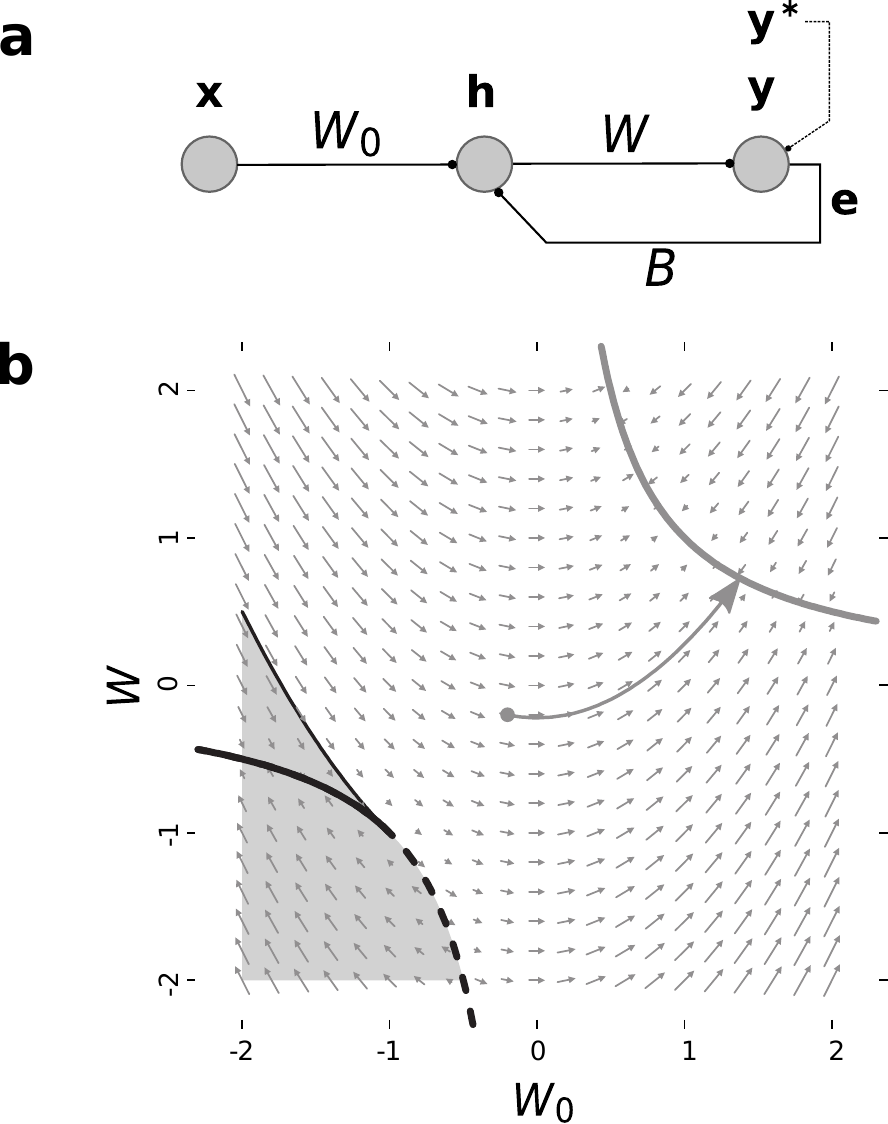}	
\caption{
Network dynamics underlying feedback alignment.
{\bf a}, Three-neuron network learning to match a linear function, $\mb{y}^{*}=T\mb{x}$, with $T=1$ and $B$ `randomly' chosen to be 1.  
{\bf b}, Vector flow field (small arrows) demonstrates the evolution of $W_{0}$ and $W$ during feedback alignment.  
Thick lines are solution manifolds (i.e. $W_{0}W=1=T$) where: $\mb{e}WB\mb{e}>0$ (grey), $\mb{e}WB\mb{e}<0$ (black), or unstable solutions (dashed black).
There is a small region of weight space (shaded grey) from which the system travels to the ``bad'' hyperbola at lower left, but this is simply avoided by starting near 0.
Large arrow traces the trajectory for the initial condition $W_{0}=W=0$.
}
\label{fig:flow}
\end{figure}

\par

Why does feedback alignment work? 
Its mechanism arises from certain novel network dynamics. 
To explain them, we consider a minimal network with just one linear neuron in each layer (Fig.~\ref{fig:flow}a, see Methods). 
We visualize (Fig.~\ref{fig:flow}b) how the network's two weights, $W_{0}$ and $W$, evolve when the feedback weight $B$ is set to 1. 
The flow field shows that the system moves along parabolic paths. 
From most starting points the network weights travel to the hyperbola at upper right (Fig.~\ref{fig:flow}b). 
This hyperbola is a set of stable equilibria solutions where $W>0$ and therefore $\mb{e}^{T}WB\mb{e}>0$ for all $\mb{e}$ -- that is, equation (\ref{equ:condition}) is satisfied, which means $W$ has evolved so that the feedback matrix $B$ is delivering useful teaching signals. 

\par

\begin{figure}
\centering
\includegraphics[width=0.5\textwidth]{./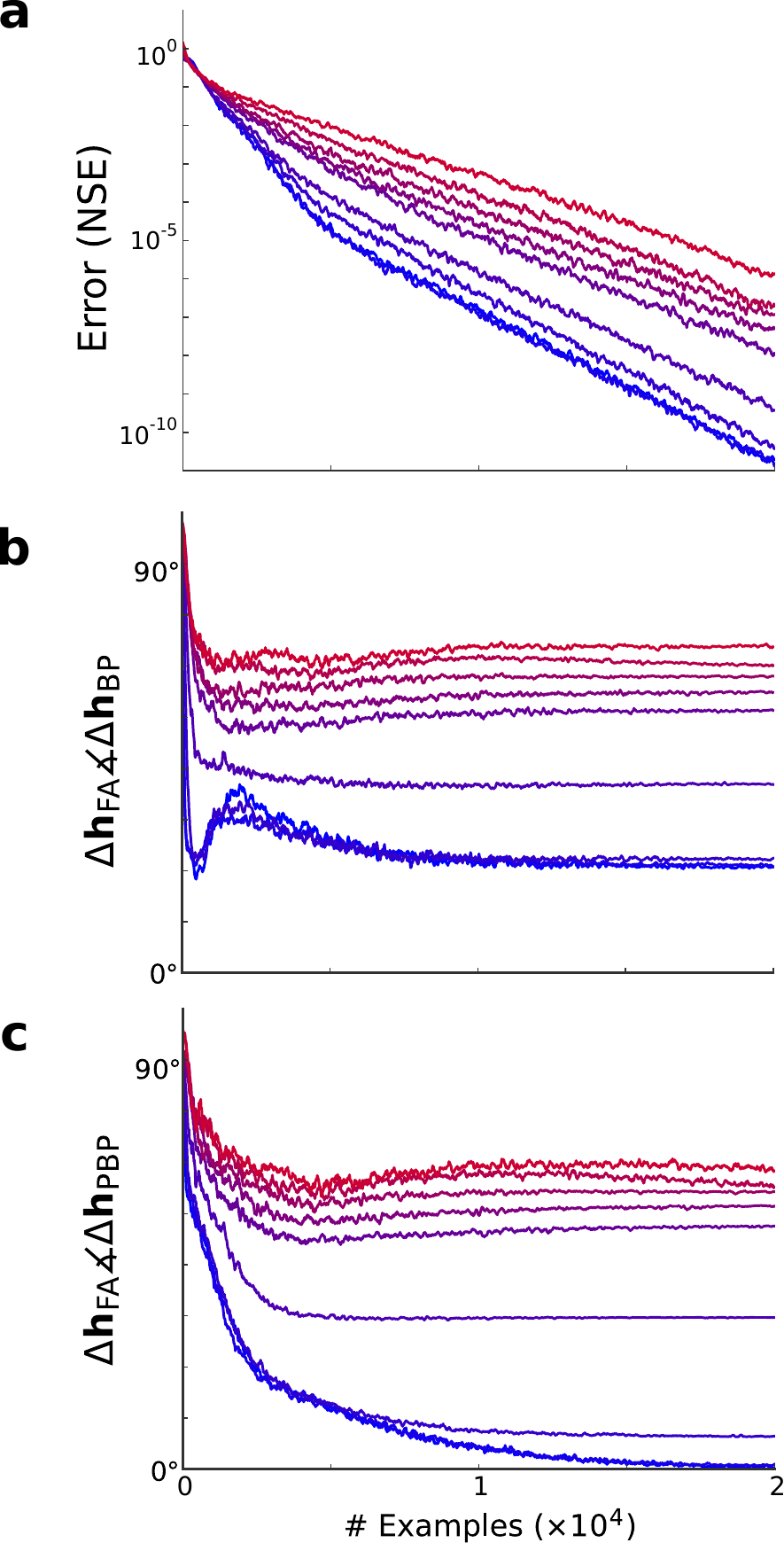}	
\caption{
If $W_{0}$ and $W$ start small then $W$ learns to act like a local pseudoinverse of $B$.
{\bf a}, Each trace is a single run of feedback alignment learning with the elements of $W_{0}$ and $W$ drawn uniformly from $\left[-\omega,\omega\right]$, where $\omega=[0.0001,0.001,0.01,0.05,0.1,0.125,0.15,0.2,0.25]$, corresponding to blue through red, respectively. 
Loss is normalized squared error (NSE).
{\bf b-c}, Angle between the hidden unit changes prescribed by feedback alignment versus backprop (panel b) and versus pseudobackprop (panel c).
}
\label{fig:why}
\end{figure}

\par

In higher dimensions we can identify conditions under which feedback alignment is guaranteed to reduce errors to zero (Supplementary Proof 1). 
Importantly, the proof holds for cases where the error can reach zero only if $B$ transmits useful information to the hidden neurons. 
The proof also demonstrates that high-dimensional analogues of the pattern of parabolic paths seen in the minimal network (Fig.~\ref{fig:flow}a, b), also hold for networks with large numbers of units.  
Indeed, the proof hinges on the fact that feedback alignment yields the relation $BW+W^{T}B^{T}=W_{0}W_{0}^{T}+C$, where $C$ is a constant, i.e. the left-hand side is a quadratic function of $W_{0}$.

\par

Feedback alignment updates do not converge with backprop (Fig.~\ref{fig:main}e, Fig.~\ref{fig:robust}b), superficially suggesting that they are merely suboptimal approximations of $\Delta \mb{h}_{\mathrm{BP}}$.
Further analysis shows this view is too simplistic.
Our proof says that weights $W_0$ and $W$ evolve to equilibrium manifolds, but simulations (Fig.~\ref{fig:why}) and analytic results (Supplementary Proof 2) hint at something more specific: that when the weights begin near $0$, feedback alignment encourages $W$ to act like a local pseudoinverse of $B$ around the error manifold. 
This fact is important because if $B$ were {\em exactly} $W^{+}$ (the Moore-Penrose pseudoinverse of $W$), then the network would be performing Gauss-Newton optimization (Supplementary Proof 3). 
We call this update rule for the hidden units {\em pseudobackprop} and denote it by $\Delta\mb{h}_{\mathrm{PBP}}=W^{+}\mb{e}$.
Experiments with the linear network show that the angle, $\Delta\mb{h}_{\mathrm{FA}}\measuredangle\Delta\mb{h}_{\mathrm{PBP}}$ quickly becomes smaller than $\Delta\mb{h}_{\mathrm{FA}}\measuredangle\Delta\mb{h}_{\mathrm{BP}}$ (Fig.~\ref{fig:why}b, c; see Methods).
In other words feedback alignment, despite its simplicity, displays elements of second-order learning.

\par

In the 1980's, new artificial network learning algorithms promised to provide insight into brain function\cite{hinton1986}.
But the most powerful class of algorithms use error signals tailored to each neuron and seemed impossible to implement in the brain because they required weight transport\citep{grossberg1987,crick1989}. 
More-plausible algorithms have been devised\citep{mazzoni1991,seung2003a,urbanczik2009,tweed2012,tweed2008,sejnowski1985,hinton1988,kolen1994,oreilly1996,kording2001}, but these either fall far short of backprop's speed or call for a lot of additional processing\citep{seung2003a,pouget2000,harris2008,urbanczik2009,tweed2008}.
In marked contrast, the mechanism developed here is much {\em simpler} than backprop, but still matches its speed and accuracy.
Feedback alignment dispels the central assumption of previous neuron-specific algorithms - that error information must be precisely tailored for each neuron.
Our work shows that it is far easier than previously thought to send neuron-specific teaching signals through a deep network: all you need is random feedback connections. 
Thus, the principles underlying feedback alignment learning are compatible with many brain circuits in which reciprocal feedback connections exist, such as occur within, and between regions of the neocortex\citep{felleman1992,martin2004}. 
This makes it an attractive basis for understanding various forms of learning in deep networks, including the integration of sensory information and motor adaptation processes.
Finally, feedback alignment may offer new opportunities to integrate neuroscience with recent advances in machine learning which have highlighted the power of deep architectures\citep{hinton2006,hinton2006a,schmidhuber2010,hinton2012,yoshua2010}.

\section*{Methods Summary}

We trained feedforward networks on three tasks.
In all cases the goal was to minimize the square of the error, $L=(1/2)\mb{e}^{T}\mb{e}$, where $\mb{e}=\mb{y}^{*}-\mb{y}$ is the difference between the desired and actual output.
{\em Task~(1):} A 30--20--10 linear network learned to approximate a linear function, $T$.
Input/output training pairs were produced via, $\mb{y}^{*} = T\mb{x}$, with $\mb{x} \sim \mathcal{N}(\bs{\mu}=0,\Sigma=I)$. 
Output weights were adjusted via, $\Delta W \propto \mb{e}^{T}\mb{h}$.
Hidden weights were adjusted according to: 
(a) backprop: $\Delta W_{0} \propto (W^{T}\mb{e})\mb{x}^{T}=\Delta \mb{h}_{\mathrm{BP}}\mb{x}^{T}$, 
(b) feedback alignment: $\Delta W_{0} \propto (B\mb{e})\mb{x}^{T}=\Delta\mb{h}_{\mathrm{FA}}\mb{x}^{T}$ where the elements of $B$ were drawn from the uniform distribution over $\left[ -0.5,0.5 \right]$,
(c) a variant of reinforcement learning called node perturbation\citep{williams1992,seung2005}.
We chose the learning rate $\eta$ for each algorithm via manual search\citep{yoshua2012} in order to optimize learning speed.  
{\em Task~(2):} A 784--1000--10 network with standard sigmoidal hidden and output units (i.e., $\sigma(x)=1/(1+\exp(-x))$) was trained to classify images of handwritten digits, 0--9.  
Each unit had an adjustable input bias.  
Standard 1-hot representation was used to code desired output. 
The network was trained with 60,000 images from the standard MNIST dataset\citep{mnist}, and performance was measured as the percentage of errors made on a held aside test set of 10,000 images.  
Both algorithms used a learning rate of, $\eta=10^{-3}$, and weight decay, $\alpha=10^{-6}$. 
Parameter updates were the same as those used in the linear case, but with $\Delta \mb{h}_{\mathrm{FA}}=(B \mb{e}) \circ \bs{\sigma}'$, where $\circ$ is element-wise multiplication and $\bs{\sigma}'$ is the derivative of the output activations.
{\em Task~(3):} A 30--20--10 and 30--20--10--10 network were trained to approximate the output of a 30--20--10--10 target network.
All three networks had $\tanh(\cdot)$ hidden units, linear output units, and an adjustable input bias for each unit.
Input/output training pairs were produced via, $\mb{y}^{*}=W_{2}\tanh(W_{1}\tanh(W_{0}\mb{x}+\mb{b}_{0})+\mb{b}_{1})+\mb{b}_{2}$, with $\mb{x} \sim \mathcal{N}(\bs{\mu}=0,\Sigma=I)$.
The angle between two vectors, e.g. $\mb{a} \measuredangle \mb{b}$, was computed as: $\theta=\cos^{-1}(||\mb{a}^{T}\mb{b}||/(||\mb{a}||\cdot||\mb{b}||))$.

{\bf Acknowledgements} This project was supported by the European Community’s Seventh Framework Programme (FP7/2007-2013), NSERC, and the Swedish Research Council (grant 2009-2390).

{\bf Author Contributions}
T.L., C.A. conceived the project; T.L., D.C., D.T. ran the simulations; T.L., D.C., D,T. wrote the Supplementary Information; T.L., D.C., D.T., C.A. wrote the manuscript.

%{\bf Author Information}
%The authors declare no competing financial interests.  Correspondence should be addressed to:\\
%T.L. (countzero@google.com)\\
%T.L. (timothy.lillicrap@pharm.ox.ac.uk)\\
%C.A. (colin.akerman@pharm.ox.ac.uk)

%%%%%%%%%%%%%%%%%%%%%%%%%%%%%%%%%%%%%%%%%%%%%%%%%%%%%%%%%%%%%%%%%%%
%%%%%%%%%%%%%%%%%%%%%%%% FIGURE LEGENDS %%%%%%%%%%%%%%%%%%%%%%%%%%%
%%%%%%%%%%%%%%%%%%%%%%%%%%%%%%%%%%%%%%%%%%%%%%%%%%%%%%%%%%%%%%%%%%%

%Each figure legend should begin with a brief title for the whole figure and
%continue with a short description of each panel and the symbols used. For
%contributions with methods sections, legends should not contain details
%of methods, or exceed 100 words (fewer than 500 words for the whole
%paper). 

%%%%%%%%%%%%%%%%%%%%%%%%%%%%%%%%%%%%%%%%%%%%%%%%%%%%%%%%%%%%%%%%%%%
%%%%%%%%%%%%%%%%%%%%%%%%%% METHODS %%%%%%%%%%%%%%%%%%%%%%%%%%%%%%%%
%%%%%%%%%%%%%%%%%%%%%%%%%%%%%%%%%%%%%%%%%%%%%%%%%%%%%%%%%%%%%%%%%%%

%If more space is required for Methods, the author should include the 300-word section ‘Methods Summary’ and provide an additional ‘Methods’ section at the end of the paper. 
%This Methods section will appear in the online PDF and in the full-text (HTML) version of the paper online, but will not appear in the printed issue.
%Online methods should not typically exceed 3000 words.
%There should be no duplication between the Methods section and the Supplementary Information. 
%The Methods section cannot contain figures or tables (essential display items should be included in the Supplementary Information).

\section*{Full Methods}

Comparing the performance of different learning algorithms is notoriously tricky.
Indeed, the no-free-lunch theorems remind us that any comparison tells only one part of the story\citep{wolpert1996}.
We have used straightforward methodological approaches, allowing us to focus on the novel aspects of our observation.
Thus, fixed learning rates, and simple methods for selecting hyperparameters have been used throughout.
Performance may be improved by more complicated schemes, but our simple approach ensures a clear view of the fundamental ideas of the main text.

{\bf {\em Task (1)} Linear function approximation:}
The target linear function $T$ mapped vectors in a 30 dimensional space to 10 dimensions.
The elements of $T$ were drawn at random, i.e. uniformly from the range $[-1,1]$.
Once chosen, the target matrix was fixed, so that each algorithm tried to learn the same function.
The sequence of data points learned on was also fixed for each algorithm.
That is, the dataset $\mathcal{D} = \{(x_{1},y^{*}_{1}),\cdots (x_{N},y^{*}_{N}) \}$ was generated once according to: $\mb{y}_{i}^{*} = T\mb{x}_{i}$, with $\mb{x}_{i} \sim \mathcal{N}(\bs{\mu}=0,\Sigma=I)$.
The elements of the network weight matrices, $W_{0},W$, were initialized by drawing uniformly from the range $[-0.01,0.01]$.
For node perturbation reinforcement learning we optimized the scale of the perturbation variance by manual search\citep{williams1992,seung2005,yoshua2010}. 
Simulations for Fig.~\ref{fig:why} were essentially the same as those for Fig.~\ref{fig:main} except that the learning rate for the runs was set to $\eta=10^{-3}$.

\par

{\bf {\em Task (2)} MNIST dataset:}
We manually optimized the initial scale of the $W_{0}$ and $W$ weight matrices and the learning rate, $\eta$, to give good performance with the backprop algorithm.
That is, the elements of $W_{0}$ and $W$ were drawn from the uniform distribution over $\left[-\omega,\omega \right]$ where $\omega$ was selected by looking at final performance on the test set.
The same scale for the forward matrices and learning rate were used with the feedback alignment algorithm.
In a similar fashion, the elements of the $B$ matrix were drawn from a uniform distribution over $\left[-\beta,\beta \right]$ with $\beta$ chosen by manual search.  
Empirically, we found that many scale parameters for $B$ worked well.
In practice it required 5 restarts to select the scale used for $B$ in the simulations presented here.
Once a scale for $B$ was chosen, a new $B$ matrix was drawn for each of the n=20 simulations.
In the experiments where 50\% of the weights in $W$ and $B$ were removed, we drew the remaining elements from the same uniform distributions as above (i.e. using $\omega$ and $\beta$).
Learning was terminated after the same number of iterations for each simulation and for each algorithm.
We selected the termination time by observing when backprop began to overfit on the test set.

\par

{\bf {\em Task (3)} Nonlinear function approximation:}
The parameters for the 30--20--10--10 target network, $T(\cdot)$, were chosen at random and then fixed for all of the corresponding simulations.
We sought a parameter regime for the target network in which backprop gained an unambiguous advantage from having an additional hidden layer.
The sequence of data points learned on was fixed for each algorithm.
The dataset $\mathcal{D} = \{(x_{1},y^{*}_{1}),\cdots (x_{N},y^{*}_{N}) \}$ was generated once according to: $\mb{y}_{i}^{*} = T(\mb{x}_{i})$, with $\mb{x}_{i} \sim \mathcal{N}(\bs{\mu}=0,\Sigma=I)$.
A new set of random matrices were chosen for each of the n=20 simulations, both for the forward synaptic weights and biases and the backward matrices.
5000 data points were held aside as a test-set.  
Network performance was evaluated as the normalized squared error on these points.  
Hidden unit updates with feedback alignment were $\Delta\mb{h}^{1}_{\mathrm{FA}}=(B_{2} \mb{e})$, and $\Delta\mb{h}^{0}_{\mathrm{FA}}=B_{1} ((B_{2} \mb{e}) \circ \mb{h}_{1}')$ for the deeper hidden layer, where $\mb{h}_{1}'$ is the derivative of the $\mb{h}_{1}$ activities and $\circ$ is element-wise multiplication.
The elements of $B_{1}$ and $B_{2}$ were drawn from uniform distributions with scale parameters selected manually.
Learning was terminated after the same number of iterations for each simulation and for each algorithm.
We selected the termination time by observing when backprop made negligible gains on the training error.

\par

{\bf {Flow Field in Fig.~\ref{fig:flow}}:}
To produce the flow fields in Fig.~\ref{fig:flow}, we computed the {\em expected} updates made by feedback alignment, rendering deterministic dynamics.  Details for the deterministic dynamics can be found in Supplementary Proof 1.

{\bf Computational details:}

All learning experiments were run using custom built code in Python with the Numpy library.
MNIST experiments were sped up using a GPU card with the Cudamat and Gnumpy libraries\citep{mnih2009,tieleman2010}.
The dynamics in Fig.~\ref{fig:flow}b were simulated with custom-built code in Matlab.

\par

%%%%%%%%%%%%%%%%%%%%%%%%%%%%%%%%%%%%%%%%%%%%%%%%%%%%%%%%%%%%%%%%%%%
%%%%%%%%%%%%%%%%%%%%%%%%%%%% REFERENCES %%%%%%%%%%%%%%%%%%%%%%%%%%%
%%%%%%%%%%%%%%%%%%%%%%%%%%%%%%%%%%%%%%%%%%%%%%%%%%%%%%%%%%%%%%%%%%%
\newpage
\bibliography{references}
\bibliographystyle{unsrt}
%\bibliographystyle{ieeetr}

%%%%%%%%%%%%%%%%%%%%%%%%%%%%%%%%%%%%%%%%%%%%%%%%%%%%%%%%%%%%%%%%%%%
%%%%%%%%%%%%%%%%%%%%%%%%%%%%% FIGURES %%%%%%%%%%%%%%%%%%%%%%%%%%%%% 
%%%%%%%%%%%%%%%%%%%%%%%%%%%%%%%%%%%%%%%%%%%%%%%%%%%%%%%%%%%%%%%%%%%

%%%%%%%%%%%%%%%%%%%%%%%%%%%%%%%%%%%%%%%%%%%%%%%%%%%%%%%%%%%%%%%%%%%
%%%%%%%%%%%%%%%%%%%% SUPPLEMENTARY FIGURES %%%%%%%%%%%%%%%%%%%%%%%%
%%%%%%%%%%%%%%%%%%%%%%%%%%%%%%%%%%%%%%%%%%%%%%%%%%%%%%%%%%%%%%%%%%%
\newpage 

\renewcommand\thefigure{S\arabic{figure}}

\section*{Supplementary Information for ``Random feedback weights support learning in deep neural networks''}

\section*{Supplementary Figures.}
\begin{figure}[h!]
\setcounter{figure}{0}
\centering
        \includegraphics[width=0.75\textwidth]{./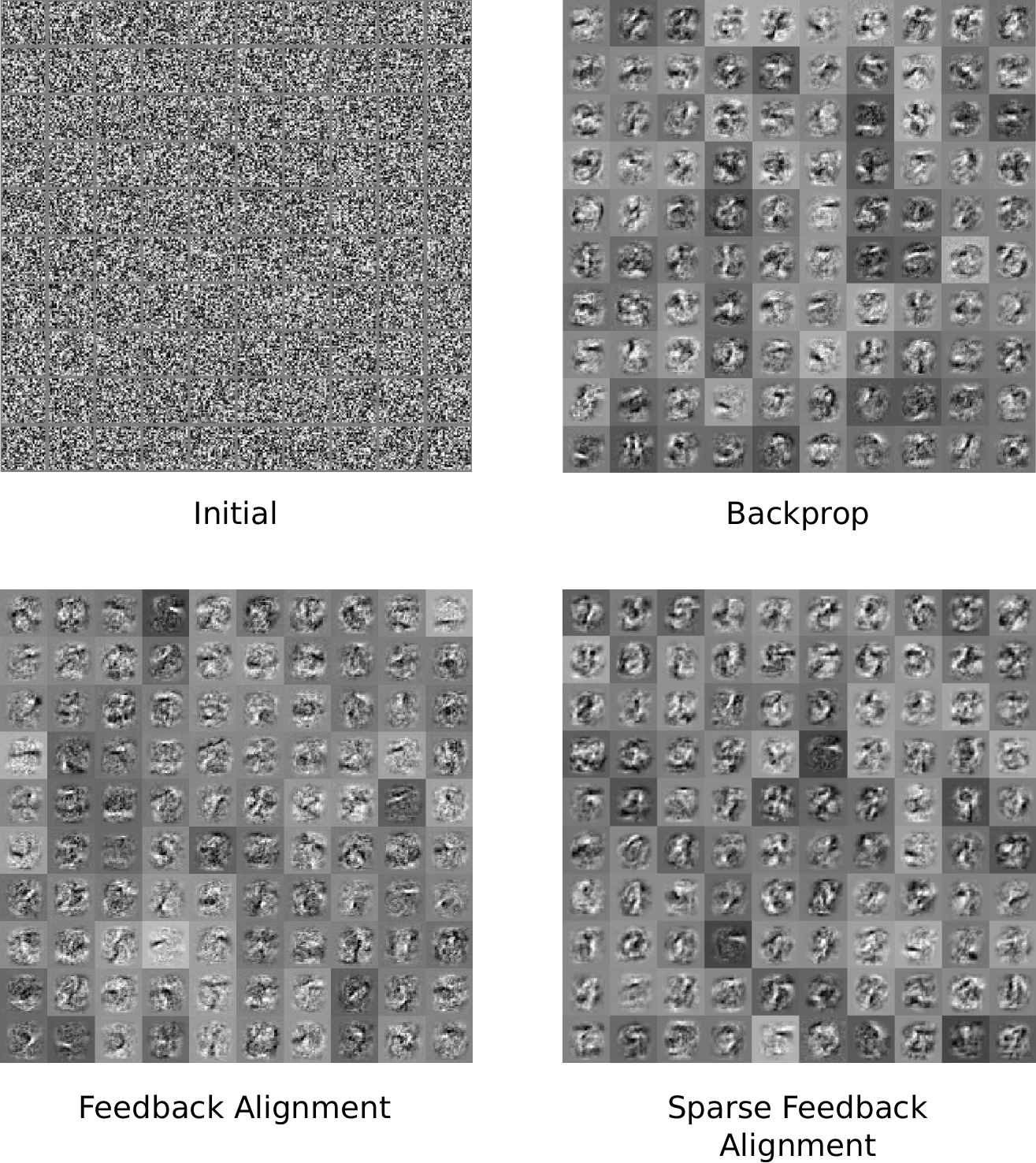}    
\caption{
Receptive fields for 100 randomly selected hidden units shown at the beginning of learning (top left) and for the three learning variants discussed in the main text.  Grey scale indicates the strength of connection from each of $28\times 28$ pixels in MNIST images (white denotes strong positive, black denotes strong negative).
}
\label{fig:rfs}
\end{figure}

\newpage

\section*{Introduction to analytic results.}
Here we present three analytical results which provide insight into the efficacy of feedback alignment. 
The first result gives conditions under which feedback alignment is guaranteed to reduce the error of a network function to 0. 
The second result demonstrates that the backprop algorithm can be modified in a simple way to impliment the second order Gauss-Newton method of error minimization, as contrasted with gradient descent method employed by standard backprop. 
The third result hints at a possible connection between feedabck alignment and this Gauss-Newton modification of backprop.

\section*{Proof \#1: Condition for alignment to reduce error to zero.}

Although the empircal results presented in the main text suggest that feedback alignment is effective across a broad range of problems, we cannot, at this point, sharply delinate the space of learning problems where feedback alignment is gaurenteed to work.
We are, however, able to establish a class of problems where feedback alignment is gaurenteed to reduce training error to 0.
Importantly this class of problems contains cases where useful modifications must be made to downstream synaptic weights to achieve this error redcution.
Thus, this theorem establishes that alignment does indeed succeed in transmitting useful error information to neurons deep within the network. 

\par

We consider a linear network which generates output $\vy$, from input $\vx$ according to 
\begin{align} 
\vh &= A \vx \\
\vy &= W \vh
\end{align}
For each data point $\vx$ presented to the network, the desired output, $\targ$, is given by a linear transformation $T$ so that $\targ = T \vx$, ($T$ for target).
Our goal is to modify the elements of $A$ and $W$, so that the network is functionally equivalent to $T$.

\par

Some comments on notation. Vectors $\vx$, $\vh$, $\vy$, etc. are column vectors, and we use standard matrix multiplication throughout. For example $\vx^T \vx$ is the inner product of $\vx$ with itself (resulting in a scalar) and $\vx \vx^T$ is the outer product of $\vx$ with itself (resulting in a matrix). 
For brevity and clarity the matrices of synaptic weights referred to as $W_{0}$ and $W$ in the main text are here referred to simply as $A$ and $W$ respectively. 
When refering to the specific elements of $A$ or $W$, we take $A_i^j$ to be the weight from the $i^{th}$ input element to the $j^{th}$ hidden element, and similarly we take $W_j^k$ to be the weight from the $j^{th}$ hidden element to the $k^{th}$ output element.

\par

Importantly, the transport of error problem still applies even for a linear network, with a linear target function $T$, provided the number of output units is less than the number of hidden units which is less than the number of input units, i.e. $n_o < n_h < n_i$. 
In this case the null space of $A$ (those input vectors which $A$ maps to zero) must be a subspace of the null space of $T$ if the network function is to perfectly match the target function. 
The probability of a randomly initialized $A$ having this property is effectively zero. 
Thus, if alignment is able to reduce error to zero, we can concluded that useful modifications have been made to $A$. 
Presumably, such modifications are only possible if useful information concerning the errors is employed when modifying $A$. 
In this section we prove that transmitting errors to hidden neurons via a fixed arbitrary matrix, $B$, provides sufficeinetly useful information for updating $A$, and reducing error to zero.

\par

For convenience we define:
\begin{align}
E := T - WA,
\end{align}
so that our error is $\ve = E \vx$.
Then the parameter updates can be written as 
\begin{align}
\Delta W &= \eta E \vx \vx^T A^T \\
\Delta A &= \eta B E \vx \vx^T.
\end{align}
Here, $\eta$ is a small positive constant reffered to as the learning rate.

\par

Instead of modifying the parameters $A$ and $W$ after experiencing a single training pair $(\vx, T \vx)$, it is possible to expose the network to many training examples, and then make a single parameter change proportional to the average of the parameter changes prescribed by each training pair.
Learning in this way is referred to as batch-learning. 
In the limit as batch size becomes large the change in the parameters becomes deterministic and proportional to the expected change from a data point.
\begin{align}
\Delta W &= \eta \left[ E \vx \vx^T A^T \right] \\
\Delta A &= \eta \left[ B E \vx \vx^T \right]
\end{align}
Here $[\cdot]$, denotes the expected value of a random variable. 
Under the assumption that the elements of $\vx$ are i.i.d. standard normal random variables, i.e. mean 0 and standard deviation 1, then $ \left[ \vx \vx^T \right] = I$. 
Here and throughout $I$ denotes an identity matrix. 
Thus, under this normality assumption, in the limit as batch size becomes large the learning dynamics simplify to
\begin{align}
\Delta W &= \eta EA^{T} \\
\Delta A &= \eta BE.
\end{align}
In the limit as the learning rate, $\eta$ becomes small these discrete time learning dynamics described converge to the continuous time dynamical system
\begin{align}
\dot{W} &= EA^T \label{eq:dotW} \\
\dot{A} &= BE \label{eq:dotA}.
\end{align}

We will work within the context of this continuous time dynamical system when proving theorem \ref{thrm:Eto0}.

Throughout the proof of theorem \ref{thrm:Eto0} we will use the following relation.
\begin{align}
BW + W^{T}B^{T} = AA^{T} + C \label{eq:AAT}
\end{align}
To see why equation \ref{eq:AAT} holds note that if we multiply equation \ref{eq:dotW} by $B$ on the left and mulitply equation \ref{eq:dotA} by $A^T$ on the right, we have
\begin{align}
\dot{A}A^{T} &= B\dot{W} \\
\int \dot{A}A^{T} \mbox{d}t & = \int B\dot{W} \mbox{d}t + C_{1}.
\end{align}
Transposing this we have
\begin{align}
\int A \dot{A}^T \mbox{d}t & = \int \dot{W}^T B^T \mbox{d}t + C_{1}^T.
\end{align}
Then since
\begin{align}
\int \dot{A} A^{T} \mbox{d}t = \int A \dot{A}^T \mbox{d}t = \frac{1}{2}AA^{T} + C
\end{align}
equation \ref{eq:AAT} follows. Note that $C = C_{1} + C_{1}^{T}$, so $C$ is symmetric and constant.

\par

We are now in a position to state and prove theorem \ref{thrm:Eto0}.

\par

\begin{thrm}\label{thrm:Eto0}
Given the learning dynamics
\begin{align}
\dot{W} &= EA^T \\
\dot{A} &= BE,
\end{align}
and assuming that the constant $C$ in equation \ref{eq:AAT} is zero and that the matrix $B$ satisfies
\begin{align}
B^+ B = I
\end{align}
then 
\begin{align}
\lim_{t \to \infty} E = 0. \label{eq:Eto0}
\end{align}
\end{thrm}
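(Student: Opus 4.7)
The plan is a Lyapunov-plus-LaSalle argument built around the identity $BW+W^{T}B^{T}=AA^{T}$ (available because $C=0$) together with the left-invertibility of $B$ encoded by $B^{+}B=I$. The natural candidate Lyapunov function is
\[
V(t) := \tfrac{1}{2}\|BE\|_F^{2} = \tfrac{1}{2}\tr\bigl(E^{T}B^{T}BE\bigr),
\]
motivated by two observations: since $\dot{A}=BE$, the quantity $BE$ tracks the drive on the hidden weights; and since $B^{+}B=I$ says $B$ has trivial kernel, once we show $BE\to 0$ we recover $E=B^{+}(BE)\to 0$ for free.

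Differentiating $V$ along the flow with $\dot{E}=-EA^{T}A-WBE$ gives
\[
\dot{V} = -\|BEA^{T}\|_F^{2} - \tr\bigl((BE)^{T}(BW)(BE)\bigr).
\]
The first term is manifestly non-positive. For the second, $(BE)(BE)^{T}$ is symmetric, so only the symmetric part of $BW$ contributes to the quadratic form, and the key identity pins that symmetric part down to $\tfrac{1}{2}AA^{T}$. The cross term collapses to $-\tfrac{1}{2}\|A^{T}BE\|_F^{2}$, so $\dot{V}\le 0$, and LaSalle's invariance principle applies: trajectories approach the largest invariant subset $\Omega$ of $\{\dot{V}=0\}$, on which $BEA^{T}=0$ and $A^{T}BE=0$.

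To finish I would show $BE\equiv 0$ on $\Omega$. The condition $BEA^{T}=0$ must persist along the dynamics, so $B\dot{E}A^{T}+BE\dot{A}^{T}=0$ identically on $\Omega$. Substituting $\dot{E}=-EA^{T}A-WBE$ gives $B\dot{E}A^{T}=-(BEA^{T})(AA^{T})-BW(BEA^{T})=0$, while $\dot{A}=BE$ makes the remaining summand $BE(BE)^{T}$. Hence $(BE)(BE)^{T}=0$; taking the trace yields $\|BE\|_F^{2}=0$, so $BE=0$, and multiplying on the left by $B^{+}$ delivers $E=0$.

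The hard part I anticipate is the boundedness prerequisite for LaSalle. The Lyapunov bound together with the pseudoinverse inequality $\|E\|_F\le\|B^{+}\|_{\mathrm{op}}\|BE(0)\|_F$ directly controls $E$, but the identity $AA^{T}=BW+W^{T}B^{T}$ only couples the growth of $A$ and $W$ to each other, leaving open the a priori possibility that both norms drift to infinity while $WA$ stays near $T$. Ruling this out---probably via a secondary norm estimate or a minor strengthening of the Lyapunov function---is the one real technicality; every other step is driven transparently by the identity $BW+W^{T}B^{T}=AA^{T}$, the structural condition that makes the random feedback effective.
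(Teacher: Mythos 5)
Your proposal is essentially correct and shares its core with the paper's argument: the same Lyapunov candidate $V=\|BE\|_F^2$ (up to a factor of $\tfrac12$) and the same computation showing $\dot V=-\|BEA^T\|_F^2-\tfrac12\|A^TBE\|_F^2\le 0$ via the symmetrization of $BW$ through the identity $BW+W^TB^T=AA^T$. Where you diverge is in the convergence machinery and the endgame. The paper uses Barbalat's lemma rather than LaSalle: it shows $\ddot V$ is bounded, concludes $\dot V\to 0$, hence $EA^T\to 0$ (using that $B$ has trivial kernel), and then closes with a chain of ``constancy in the limit'' deductions ($W$ constant $\Rightarrow AA^T$ constant $\Rightarrow BTA^T$ constant $\Rightarrow BTE^TB^T=0\Rightarrow TE^T=0\Rightarrow EE^T=ET^T-EA^TW^T=0$). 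Your endgame is cleaner and more direct: differentiating the invariant identity $BEA^T=0$ along the flow to extract $BE(BE)^T=0$ avoids the paper's somewhat informal treatment of limiting values, and is a genuine improvement in rigor at that step.

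The gap you flag is real, and it is exactly the point the paper spends a dedicated lemma on. The paper bounds $A$ by setting $s=\tr(AA^T)$ and computing $\dot s=2\tr(BEA^T)=2\tr(BTA^T)-\tr(AA^TAA^T)$; Cauchy--Schwarz gives $\tr(BTA^T)\le\sqrt{s}\,\|BT\|$ while the quartic term satisfies $\tr((AA^T)^2)\ge (s/n_h)^2$, so $\dot s<0$ once $s$ exceeds an explicit threshold, yielding $s\le\|BT\|^2+2n_h$ for all time. This is the ``secondary norm estimate'' you anticipated, and it is what makes Barbalat applicable (boundedness of $A$, of $E$, and of $AA^T=BW+W^TB^T$ suffices to bound $\ddot V$). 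Note that for LaSalle as you invoke it you would need precompactness of the full $(A,W)$ trajectory, and the identity only controls the \emph{symmetric} part of $BW$, not $W$ itself; this is one reason Barbalat (which needs only $\ddot V$ bounded) is the more economical tool here, and you would either need to supply a separate bound on $W$ or reformulate your invariance argument in Barbalat's framework (e.g., by applying it a second time to $\|BEA^T\|_F^2$). With the $\tr(AA^T)$ estimate added and that adjustment made, your proof goes through.
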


\par

Some notes on the conditions of the theorem. 
Here and throughout $B^+$ denotes denotes the Moore-Penrose pseudoinverse of $B$, hereafter referred to simply as the pseudoinverse.
The condition $B^+B = I$ holds when the columns of $B$ are linearly independent, and $B$ has at least as many rows as columns, i.e. $n_o \leq n_h$.
Note that if the elements of $B$ are choosen uniformly at random then the columns of $B$ will be linearly independent with probability 1.
The condition $C=0$ is met when $AA^T = BW + W^{T}B^{T}$. 
While there are many initializations of $W$, $A$ and $B$ which satisfy this condition, the only way to ensure that the $C=0$ condition is satisfied for all possible $B$ is for $W$ and $A$ to be initialized as zero matrices. 

\par

\begin{proof}
Our proof is loosely inspired by Lyapunov's method, and makes use of Barbalat's lemma. Consider the quantity
\begin{align}
V :=  \tr(BEE^{T}B^{T}). \label{eq:V}
\end{align}
We will use Barbalat's lemma to show that $\dot{V} \to 0$.

\begin{lemm}[Barbalat's Lemma] \label{lem:Barbalat}
If $V$ satisfies:
\begin{enumerate}
\item $V$ is lower bounded, 
\item $\dot{V}$ is negative semi-definite,
\item $\dot{V}$ is uniformly continuous in time, which is satisfied if $\ddot{V}$ is finite,
\end{enumerate}
then $\dot{V} \to 0$ as $t \to \infty$.
\end{lemm}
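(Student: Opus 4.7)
The plan is to prove the lemma in two stages. First I would use the lower boundedness of $V$ together with the hypothesis $\dot V\leq 0$ to show that $V(t)$ converges to a finite limit $V_\infty$ as $t\to\infty$: being monotonically non-increasing and bounded below, the monotone convergence theorem applies, and consequently
\[
\int_{0}^{\infty}\dot V(\tau)\,\mathrm{d}\tau \;=\; V_{\infty}-V(0)
\]
is a finite real number. Second, I would apply the classical version of Barbalat's lemma: if $f:[0,\infty)\to\mathbb{R}$ is uniformly continuous and $\int_0^\infty f(\tau)\,\mathrm{d}\tau$ exists as a finite limit, then $f(t)\to 0$. Applied to $f=\dot V$, this yields the desired conclusion.

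To prove the classical version I would argue by contradiction. Suppose $\dot V(t)\not\to 0$. Then there exist $\epsilon>0$ and a sequence $t_n\to\infty$ with $|\dot V(t_n)|\geq\epsilon$, and since $\dot V\leq 0$, in fact $\dot V(t_n)\leq -\epsilon$. By passing to a subsequence I can assume $t_{n+1}>t_n+1$ so that small neighbourhoods around the $t_n$ are disjoint. By uniform continuity of $\dot V$, choose $\delta\in(0,1/2)$ so that $|s-t|<\delta$ implies $|\dot V(s)-\dot V(t)|<\epsilon/2$. Then on each interval $I_n:=[t_n-\delta,t_n+\delta]$ one has $\dot V(s)\leq -\epsilon/2$, contributing at most $-\delta\epsilon$ to the integral. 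Since $\dot V\leq 0$ everywhere, summing these lower bounds over the disjoint intervals $I_n$ gives
\[
\int_{0}^{\infty}\dot V(\tau)\,\mathrm{d}\tau \;\leq\; \sum_{n=1}^{\infty}(-\delta\epsilon) \;=\; -\infty,
\]
contradicting finiteness of the integral established in the first stage.

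Finally I would justify the parenthetical remark that boundedness of $\ddot V$ implies uniform continuity of $\dot V$: by the mean value theorem, $|\dot V(s)-\dot V(t)|\leq M|s-t|$ with $M:=\sup_t|\ddot V(t)|$, so $\dot V$ is globally Lipschitz and therefore uniformly continuous on $[0,\infty)$.

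The main obstacle is the contradiction step: one must carefully choose the subsequence $t_n$ so that the intervals $I_n$ are disjoint and contained in $[0,\infty)$, and then correctly exploit the sign condition $\dot V\leq 0$ to convert a sum of local lower bounds into a lower bound on the full improper integral. The other ingredients (monotone convergence, the Lipschitz argument) are routine once the setup is correct.
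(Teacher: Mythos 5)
Your proof is correct. Note, however, that the paper itself does not prove this statement at all: Barbalat's lemma is invoked there as a standard result from stability theory (in the spirit of Lyapunov-style arguments) and is only \emph{used}, inside the proof of Theorem~1, to conclude $\dot{V}\to 0$ for $V=\tr(BEE^{T}B^{T})$. What you have written is essentially the textbook proof that the paper leaves implicit, and it is sound: the monotone-convergence step correctly converts lower boundedness plus $\dot V\le 0$ into convergence of $V$ and hence finiteness of $\int_0^\infty \dot V\,\mathrm{d}\tau$; the contradiction step is the classical Barbalat argument, and your use of the sign condition $\dot V\le 0$ to bound the whole improper integral by the sum of the contributions over the disjoint intervals $I_n$ is a legitimate simplification (without the sign condition one would instead argue that $\int_{t_n}^{t_n+\delta} f$ fails to be Cauchy); and the mean-value-theorem argument correctly justifies the parenthetical claim that a bounded $\ddot V$ gives uniform continuity of $\dot V$. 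The only cosmetic caveat is that ``negative semi-definite'' in the paper's statement should be read as the scalar condition $\dot V(t)\le 0$ for all $t$, which is exactly how you used it, and that finitely many of the intervals $I_n$ near $t=0$ may need to be discarded, which you already handle by letting $t_n\to\infty$.
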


Because $B$ and $E$ are real valued $V$ is equivalent to $||BE||^2$. Here and throughout $||\cdot||$ refers to the Frobenius norm. Consequently $V$ is bounded below by zero, and so satisfies the first condition of lemma \ref{lem:Barbalat}.

\begin{lemm} \label{lem:dotV}
$\dot{V}$ is negative semi-definite.
\end{lemm}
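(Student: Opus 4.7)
The plan is to differentiate $V = \tr(BEE^T B^T) = \|BE\|^2$ along the flow and exhibit $\dot V$ as a sum of non-positive terms. First I would compute $\dot E$: since $E = T - WA$ and $T$ is constant, equations \ref{eq:dotW}--\ref{eq:dotA} give $\dot E = -\dot W A - W\dot A = -EA^T A - WBE$. The two cross-terms in $\frac{d}{dt}\tr(BEE^T B^T)$ are transposes of each other and hence have equal trace, so $\dot V = 2\,\tr(B\dot E\,E^T B^T)$. Substituting the expression for $\dot E$ splits the result into
\begin{align*}
\dot V \;=\; -2\,\tr(BEA^T A E^T B^T)\;-\;2\,\tr(BWBEE^T B^T).
\end{align*}

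I would then treat the two summands separately. Cyclicity of the trace rewrites the first as $-2\,\|BEA^T\|^2$, which is manifestly non-positive. The second summand is where the conservation law does its work: setting $M := BEE^T B^T$, which is symmetric and positive semi-definite, the symmetry of $M$ forces $\tr(W^T B^T M) = \tr(BWM)$, so invoking \ref{eq:AAT} with the hypothesis $C=0$ yields
\begin{align*}
2\,\tr(BWM) \;=\; \tr\bigl((BW + W^T B^T)\,M\bigr) \;=\; \tr(AA^T M).
\end{align*}
Thus the second summand collapses to $-\tr(AA^T\,BEE^T B^T)$, which is the trace of a product of two positive semi-definite matrices and hence non-negative; with the minus sign it is non-positive. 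Combining the two pieces gives $\dot V \le 0$, as required.

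The main obstacle I anticipate is justifying the handling of the cross term $\tr(BWBEE^T B^T)$, since $BW$ has no a priori sign structure. The trick is to notice that $BW$ is contracted against the symmetric object $M$, so only the symmetric part $\tfrac{1}{2}(BW + W^T B^T)$ survives in the trace, and this is exactly what the conserved quantity \ref{eq:AAT} pins down. It is worth flagging that if the hypothesis $C=0$ were relaxed, the same calculation would leave a residual $-\tr(C\,BEE^T B^T)$, and negative semi-definiteness of $\dot V$ would then require an additional condition on $C$ (e.g.\ $C$ positive semi-definite) — a point that motivates the small-weight initialisation used in the theorem's hypotheses.
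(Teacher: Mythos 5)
Your proposal is correct and follows essentially the same route as the paper's proof: differentiate $V$, reduce to $2\tr(B\dot{E}E^{T}B^{T})$ by symmetry, substitute $\dot{E}=-EA^{T}A-WBE$, recognise the first summand as $-2\|BEA^{T}\|^{2}$, and symmetrise the cross term against $BEE^{T}B^{T}$ so that the relation $BW+W^{T}B^{T}=AA^{T}$ (with $C=0$) converts it into $-\tr(AA^{T}BEE^{T}B^{T})\le 0$. Your closing observation about the residual $-\tr(C\,BEE^{T}B^{T})$ when $C\neq 0$ is a nice aside but does not change the argument.
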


\par

\begin{align}
\der{t} \tr(BEE^{T}B^{T}) &= \tr(B\dot{E}E^{T}B^{T} + BE\dot{E}^{T}B^{T}) \\
&= \tr(B\dot{E}E^{T}B^{T}) + \tr(BE\dot{E}^{T}B^{T})  \\ %trace linear
&= 2\tr(B\dot{E}E^{T}B^{T}) \\ % trace invariant to the transpose
&= 2\tr(B(-\dot{W}A - W\dot{A})E^{T}B^{T}) \\ % expand \dot{E} with chain rule
&= -2\tr(BEA^{T}AE^{T}B^{T}) - 2\tr(BWBEE^{T}B^{T})  %substitution of dynamics
\end{align}
Now,
\begin{align}
2\tr(BWBEE^{T}B^{T}) &= \tr(BWBEE^{T}B^{T}) + \tr(BWBEE^{T}B^{T}) \\ 
&= \tr(BWBEE^{T}B^{T}) + \tr(BEE^{T}B^{T}W^{T}B^{T}) \\ % by transpose
&= \tr(BWBEE^{T}B^{T}) + \tr(W^{T}B^{T}BEE^{T}B^{T}) \\ % by circulation
&= \tr(AA^{T}(BEE^{T}B^{T})) \\ % factoring and substitution of the relation 
&= \tr(A^{T}BEE^{T}B^{T}A) % by circulation
\end{align}
which gives us that
\begin{align}
\der{t} \tr(B E E^T B^T) = -2\tr(B E A^T A E^T B^T) - \tr(A^T B E E^T B^T A) \leq 0 \label{eq:dotV}
\end{align}
since each of these terms is of the form $\tr(XX^{T})$, i.e. the inner product of a vector with itself.

\par

\begin{lemm} \label{lem:Abound}
$A$ is bounded.
\end{lemm}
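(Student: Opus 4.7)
\textbf{Proof plan for Lemma \ref{lem:Abound}.} The strategy is to derive a differential inequality for $\|A\|_F^2$ that carries a negative quartic term $-c\|A\|_F^4$, so that once $\|A\|_F$ is large, its growth is reversed. The negative term will be produced by the conservation law \ref{eq:AAT} (with $C=0$), which rigidly couples $A$ and $W$.

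I would start by differentiating along the dynamics \ref{eq:dotW}--\ref{eq:dotA}:
\begin{align*}
\der{t}\|A\|_F^2 = 2\tr(\dot{A}A^T) = 2\tr(BEA^T) = 2\tr(BTA^T) - 2\tr(BW\cdot AA^T),
\end{align*}
where in the last equality I used $E = T - WA$. The key step is to simplify $\tr(BW\cdot AA^T)$ using the conservation law $AA^T = BW + W^T B^T$. Substituting gives $\tr(BW\cdot AA^T) = \tr((BW)^2) + \|BW\|_F^2$, and I would then decompose $BW = M + S$ into its symmetric part $M$ and antisymmetric part $S$. The conservation law identifies $M = \tfrac{1}{2}AA^T$, and because $M$ and $S$ are Frobenius-orthogonal while $\tr(S^2) = -\|S\|_F^2$, a short calculation yields $\tr((BW)^2) + \|BW\|_F^2 = 2\|M\|_F^2 = \tfrac{1}{2}\|AA^T\|_F^2$. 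This collapses the identity to
\begin{align*}
\der{t}\|A\|_F^2 = 2\tr(BTA^T) - \|AA^T\|_F^2.
\end{align*}

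I would close the argument with two elementary estimates. Cauchy--Schwarz gives $|\tr(BTA^T)| \leq \|BT\|_F\|A\|_F$, and the power-mean inequality applied to the singular values of $A$ gives $\|AA^T\|_F^2 = \sum_i \sigma_i(A)^4 \geq \|A\|_F^4/n_h$. Combining,
\begin{align*}
\der{t}\|A\|_F^2 \leq 2\|BT\|_F\|A\|_F - \tfrac{1}{n_h}\|A\|_F^4.
\end{align*}
Since the right-hand side is strictly negative once $\|A\|_F^2 > (2n_h\|BT\|_F)^{2/3}$, the trajectory $\|A(t)\|_F^2$ cannot escape the set $\{u : u \leq \max(\|A(0)\|_F^2,\,(2n_h\|BT\|_F)^{2/3})\}$, and $A$ is therefore bounded.

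The main obstacle is the algebraic cancellation that produces the quartic restoring term. A naive Cauchy--Schwarz bound on $\der{t}\|A\|_F^2 = 2\tr(BEA^T)$ using only the fact that $V$ is bounded gives merely $\der{t}\|A\|_F \leq \sqrt{V(0)}$, i.e.\ linear-in-$t$ growth, which is far too weak. It is precisely the $C=0$ constraint that forces the symmetric part of $BW$ to equal $\tfrac{1}{2}AA^T$, converting a bounded driving signal $BE$ into an effective cubic restoring force on $\|A\|$ and yielding the desired boundedness.
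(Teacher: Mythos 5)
Your proposal is correct and follows essentially the same route as the paper: differentiate $s=\tr(AA^T)$, use the conservation law $AA^T=BW+W^TB^T$ to convert the term $2\tr(BW\,AA^T)$ into $\tr(AA^TAA^T)=\|AA^T\|_F^2$, and then trap the trajectory via Cauchy--Schwarz plus a power-mean lower bound on $\|AA^T\|_F^2$. Your symmetric/antisymmetric decomposition of $BW$ and the paper's direct trace-transpose manipulation are two ways of writing the same cancellation, and your final differential inequality is, if anything, slightly cleaner than the paper's.
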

Consider
\begin{align}
s := \tr(AA^T).
\end{align}
Then
\begin{align}
\dot{s} & = 2\tr(BEA^T) \\ % chain rule, transpose, and substitution \dot{A} = BE in.
&= 2 \tr(BTA^T - BWAA^T) \\ % substitution in E = T-WA
&= 2 \tr(BTA^T) - \tr(AA^TAA^T). %transpose one copy of BWAA^T and cycle to get W^T B^T AA^T then substitution AA^T = BW + W^TB^T \ref{eq:AAT}
\end{align}
Now $AA^T$ is an $n_h \mbox{ x } n_h$ symmetric matrix and hence diagonalizable, therfore 
\begin{align}
s \leq n_h \lambda \label{eq:sless}
\end{align}
where $\lambda$ is the dominant eigenvalue of $AA^T$. Then
\begin{align}
\tr(AA^TAA^T) &= || AA^T || \\ %definition of Frobenius norm
& \geq \lambda^2 \\ % Frobenius norm is the sqrt of the sum of each entry squared, \lambda is a single entry of AA^T
& \geq \left( \frac{s}{n_h} \right)^2. %from inequality \ref{eq:sless}
\end{align}
It follows that 
\begin{align}
\dot{s} \leq 2 \tr(BTA^T) - \left( \frac{s}{n_h} \right)^2. %sub \tr(AA^TAA^T) \geq \left( \frac{s}{n_h} \right)^2 into \dot{s} = 2 \tr(BTA^T) - \tr(AA^TAA^T).
\end{align}
Using the Cauchy-Schwarz inequality we have that
\begin{align}
\tr(B T A^T)^2 \leq \tr(AA^T) \cdot \tr(BTT^TB^T) = s ||BT||^2,
\end{align}
so that when $s > ||BT||^2$, then $ \tr(BTA^T) \leq s$.
Therefore
\begin{align}
\dot{s} < 2s - \frac{s^2}{n_h^2}
\end{align}
when $ s > ||BT||^2$.
This implies that
\begin{align}
\dot{s} < 0
\end{align}
when $s > ||BT||^2$ and $s > 2 n_h$. 
We can conclude that 
\begin{align}
s \leq ||BT||^2 + 2 n_h
\end{align}
for all time.

\par

\begin{lemm}
$\ddot{V}$ is bounded.
\end{lemm}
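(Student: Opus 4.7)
The plan is to differentiate $\dot V$ once more using the learning dynamics and to show that every resulting term is uniformly bounded in $t$, so that $\dot V$ is uniformly continuous in time and Barbalat's lemma (Lemma~\ref{lem:Barbalat}) will deliver $\dot V\to 0$.

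First I would assemble the boundedness facts already available. Lemma~\ref{lem:dotV} gives $\dot V\le 0$, so $V=\tr(BEE^TB^T)=\|BE\|^2$ is nonincreasing and hence $\|BE\|$ is uniformly bounded. The hypothesis $B^{+}B=I$ then yields $E=B^{+}(BE)$, so $\|E\|\le\|B^{+}\|\,\|BE\|$ is uniformly bounded. Lemma~\ref{lem:Abound} bounds $\|A\|$, and since $E=T-WA$ we also have $\|WA\|\le\|T\|+\|E\|$ bounded. The invariant $BW+W^TB^T=AA^T$ (equation~\ref{eq:AAT} with $C=0$) further shows that the symmetric part of $BW$ equals $\tfrac12 AA^T$ and is therefore bounded. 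Finally, integrating $\dot V\le -2\|AE^TB^T\|^2$ along the trajectory yields $\int_0^\infty\|AE^TB^T\|^2\,dt\le \tfrac12 V(0)$, which I will need for the delicate step.

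Next I would expand
\[
\ddot V \;=\; \der{t}\Bigl(-2\|AE^TB^T\|^2 - \|A^TBE\|^2\Bigr),
\]
substituting $\dot A=BE$ and $\dot E=-EA^TA-WBE$. The result is a finite sum of trace terms. All terms that do not contain $W$ are polynomials in the already-bounded matrices $A$, $E$, $BE$, $B$, $T$, and are immediately bounded. The $W$-dependence enters only through the factor $WBE$ in $\dot E$, so each residual term has the shape $\tr(M\,W^TB^T)$ or $\tr(M\,BW)$ for some bounded matrix polynomial $M=M(A,E,B)$. The key tool is a symmetrization identity that follows from $BW+W^TB^T=AA^T$: whenever $M$ is symmetric, cyclic invariance of the trace gives $\tr(M\,W^TB^T)=\tr(M\,BW)$, so each equals $\tfrac12\tr(M\,AA^T)$, which is bounded. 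The clean example is the term from $\der{t}\|AE^TB^T\|^2$, namely $\tr(M_1\,W^TB^T)$ with $M_1=BEA^TAE^TB^T=(BE)(A^TA)(BE)^T$, manifestly symmetric.

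The main obstacle I anticipate is the $W$-term from $\der{t}\|A^TBE\|^2$, namely $\tr(J\,BW)$ with $J=BEE^TB^TAA^T$, which is \emph{not} symmetric because $AA^T$ and $BEE^TB^T$ need not commute. Splitting $BW=\tfrac12 AA^T+K$ with $K=\tfrac12(BW-W^TB^T)$ antisymmetric isolates a bounded contribution together with $\tr(JK)=\tr(J_{\mathrm{asym}}K)=\tfrac12\tr\!\bigl([BEE^TB^T,AA^T]\,K\bigr)$. To close the argument I would separately bound $\|K\|$ by a Lyapunov-style computation paralleling Lemma~\ref{lem:Abound}: from $\dot K=\tfrac12(BEA^T-AE^TB^T)$ one computes $\der{t}\|K\|^2=2\tr(AE^TB^T\,K)$, and the $L^2$-integrability of $\|AE^TB^T\|$ established above, combined with a Gronwall/Cauchy--Schwarz estimate, should propagate a uniform bound on $\|K\|$. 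This $K$-bound is the hardest part of the argument; once it is in place every term of $\ddot V$ is bounded, $\dot V$ is uniformly continuous, and Barbalat's lemma applies.
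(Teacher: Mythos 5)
Your overall strategy is the paper's: expand $\ddot V$ using $\dot A = BE$ and $\dot E = -EA^TA - WBE$, and bound every resulting trace term using the boundedness of $E$ (from $V$ nonincreasing plus $B^+B=I$), of $A$ (Lemma~\ref{lem:Abound}), and of the symmetric combination $BW+W^TB^T=AA^T$. Your treatment of the first $W$-dependent term is correct and clean: since $M_1=(BEA^T)(BEA^T)^T$ is symmetric, $\tr(M_1BW)=\tr(M_1W^TB^T)=\tfrac12\tr(M_1AA^T)$ is bounded. You have also put your finger on the one delicate point: the invariant only controls the \emph{symmetric} part of $BW$, whereas the term $\tr\bigl(BW\,BEE^TB^TAA^T\bigr)$ couples to the antisymmetric part $K=\tfrac12(BW-W^TB^T)$ as well. (The paper's own proof simply asserts that $AA^T=BW+W^TB^T$ bounded implies $BW$ bounded, which does not follow; you were right not to accept that.)

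The gap is in your proposed repair of exactly that point. From $\der{t}\|K\|^2=2\tr(AE^TB^TK)$ you get $\der{t}\|K\|\le\|AE^TB^T\|=:g(t)$, and integrating $\dot V\le-2\|AE^TB^T\|^2$ gives only $g\in L^2([0,\infty))$. Square-integrability of $g$ does not make $\int_0^t g$ bounded: Cauchy--Schwarz yields $\int_0^t g\le\sqrt{t}\,\|g\|_{L^2}$, so this argument bounds $\|K(t)\|$ only by $O(\sqrt{t})$, not uniformly (and there is no Gronwall structure to exploit, since the right-hand side $2\|AE^TB^T\|\,\|K\|$ is linear in $\|K\|$ with an $L^2$ but not $L^1$ coefficient). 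So the term $\tr(JK)=\tfrac12\tr([BEE^TB^T,AA^T]\,K)$ is not shown to be bounded, and without that the uniform continuity of $\dot V$ --- hence the applicability of Barbalat's lemma --- is not established. To close the argument you would need either a genuinely uniform bound on $K$ (equivalently on $W$ itself, e.g.\ by a separate Lyapunov argument on $\tr(WW^T)$ analogous to Lemma~\ref{lem:Abound}), or a decay estimate on the commutator $[BEE^TB^T,AA^T]$ fast enough to offset the possible growth of $K$; as written, neither is supplied.
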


\par

Differentiating equation \ref{eq:dotV} we have that
\begin{align}
\ddot{V} =& -4 \tr(B \dot{E} A^T A E^T B^T) - 4 \tr(B E \dot{A}^T A E^T B^T) - 2 \tr(\dot{A}^T B E E^T B^T A) - 2 \tr(A^T B \dot{E} E^T B^T A)\\
= & 4 \tr(B E A^T A A^T A E^T B^T) + 4 \tr(B W B E A^T A E^T B^T) - 4 \tr(B E B E A E^T B^T) \nonumber \\ 
& - 2 \tr(B E B E E^T B^T A) + 2 \tr(A^T B E A^T A E^T B^T A) + 2 \tr(A^T B W B E E^T B^T A) 
\end{align}
Thus  $\ddot{V}$ can be expressed in terms of the traces of products of the matrices $B$, $E$, $A$, and $BW$, and the transposes of these matrices.
$B$ is constant so it is bounded,
$V$ is bounded below by zero, and $\dot{V} \leq 0$, so $V$ must converge to some value, implying the $E$ is bounded.
Lemma \ref{lem:Abound} shows that $A$ is bounded.
Recall that $AA^T = BW + W^T B^T$, and so $A$ being bounded implies that $BW$ and $W^TB^T$ are also bounded.
Taken together we have that $\ddot{V}$ is bounded.

\par

Thus the conditions of lemma \ref{lem:Barbalat} hold and in the limit as $t \to \infty$, $\dot{V} \to 0$.
Since both addends of $\dot{V}$ have the same sign, in the limit both must be identically zero.
In particular $\tr (B E A^T A E^T B^T) = 0$, therfore $B E A^T = 0$.
Here and for the remainder of this proof when use $W$, $A$, $T$ and $E$ to refer to the value of these matrices in the limit as $t \to \infty$.
Since $B$ is constant we have,
\begin{align}
E A^T = 0.
\end{align}

\par

Recall that $\dot{W} = E A^T$, and so $W$ is constant.
Together with $B$ being constant this implies that $AA^T = WB + B^TW^T$ is also constant.
By definition, $B E A^T = B T A^T - B W A A^T$.
Recall that $B E A^T = 0$, and that $B$, $W$ and $AA^T$ are all constant, and so $B T A^T$ must also be constant.
Note that $\dot{A^T} = E^T B^T$, so a constant $B T A^T$ implies that $B T E^T B^T = 0$.
Then we have
\begin{align}
0 & = B T E^T B^T = B^+ T E^T B^T (B^+)^T = T E^T = E T^T.
\end{align}
By definition $E E^T = E T^T - E A^T W^T$, and since both addends are zero $EE^T=0$. 
Thus $\tr(EE^T) = ||E|| = 0$ and that $E$ is identically zero.
\end{proof}

\par

\section*{Proof \#2: Gauss-Newton modification of backprop}

Here we will show that replacing the transpose matrix, $W^{T}$, with the Moore-Penrose pseudoinverse matrix, $W^{+}$, in the backprop algorithm renders an update rule which approximates Gauss-Newton optimization.  
That is, the pseudoinverse of the forward matrix, $W^{+}$, not only satisfies the first condition from the main text, i.e. $\ve^T W W^{+} \ve > 0$, it prescribes second order updates for the hidden units.

\par

The Gauss-Newton method is a way of minimizing squared error: it finds the vector $\bs{x}^*$ that minimizes a scalar valued function $L(\bs{x})$ of the form $L(\bs{x}) = \frac{1}{2}\bs{e}(\bs{x})^{T}\bs{e}(\bs{x})$.
It does this by starting with a guess of the value $x^*$ and iteratively improving this guess. 
When $L$ has this quadratic form its second derivative, with respect to $\vx$, or Hessian $L_{\bs{xx}}$ is $\bs{e}_{\bs{x}}^{T}\bs{e}_{\bs{x}} + \bs{e}^{T}\bs{e}_{\bs{xx}}$.
When $\bs{e}$ is small this is close to $\bs{e}_{\bs{x}}^{T}\bs{e}_{\bs{x}}$.  
Therefore the $\Delta \bs{x}$ prescribed by Newton's method, $-L_{\bs{xx}}^{-1}L_{\bs{x}}^{T}$, is roughly $-(\bs{e}_{\bs{x}}^{T}\bs{e}_{\bs{x}})^{-1}\bs{e}_{\bs{x}}^{T}\bs{e}$, which may be written as, $\bs{e}_{\bs{x}}^{+}\bs{e}$, where $\bs{e}_{\bs{x}}^{+}$ is the Moore-Penrose inverse of $\bs{e}_{\bs{x}}$.

\par

Now suppose we have a 3-layer network with input signal $\bs{x}$, weight matrices $A$ and $W$, monotonic squashing function $\bs{\sigma}$, hidden-layer activity vector $\bs{h}=\bs{\sigma}(A\bs{x})$, and linear output cells with activity,

\begin{equation}
\bs{y} = W\bs{h} = W\bs{\sigma}(A\bs{x})
\end{equation}

If we want to adjust $\bs{h}$ using the Gauss-Newton method, the formula is

\begin{equation}
\label{eq:gn}
\Delta \bs{h}_{\mathrm{GN}} = - \bs{e}_{\bs{h}}^{+}\bs{e} = -W^{+}\bs{e}
\end{equation}

Most learning networks don't adjust activity vectors like $\bs{h}$ but rather synaptic weight matrices like $A$ and $W$.  Computing the Gauss-Newton adjustment to $A$ is complicated, but a good approximation is obtained by replacing $W^{T}$ with $W^{+}$ in the backprop formula.  That is, backprop says

\begin{equation}
\label{eq:backprop}
\begin{split}
\Delta A_{j~\mathrm{BP}}^{i} & = -\eta \textstyle{\sum}_{k} (\partial L / \partial \bs{e}^{k})(\partial \bs{e}^{k} / \partial A^{i}_{j}) 
= -\eta \textstyle{\sum}_{k} \bs{e}^{k} \partial \bs{y}^{k} / \partial A^{i}_{j} \\
& = -\eta \textstyle{\sum}_{k} \bs{e}^{k}\partial\left(\textstyle{\sum}_{l} W_{l}^{k} \bs{h}^{l}\right)/\partial A_{j}^{i} =
 -\eta \textstyle{\sum}_{k} \bs{e}^{k} \textstyle{\sum}_{l} W_{l}^{k} \partial \bs{h}^{l} / \partial A_{j}^{i} \\
& = -\eta \textstyle{\sum}_{k} \bs{e}^{k}
\textstyle{\sum}_{l} W_{l}^{k} \mathrm{D} \bs{\sigma}^{l} \partial \left( \textstyle{\sum}_{m} A^{l}_{m} \bs{x}^{m} \right) / 
\partial A_{j}^{i}
= -\eta \textstyle{\sum}_{k} \bs{e}^{k} \textstyle{\sum}_{l} W_{l}^{k} \mathrm{D} \bs{\sigma} \partial A_{j}^{i} \bs{x}^{j} / \partial A_{j}^{i} \\
& = -\eta \textstyle{\sum}_{k} \bs{e}^{k} \textstyle{\sum}_{l} W_{l}^{k} \mathrm{D} \bs{\sigma}^{l} \delta^{il} \bs{x}^{j} = 
-\eta \textstyle{\sum}_{k} \bs{e}^{k} W_{l}^{k} \mathrm{D} \bs{\sigma}^{i} \bs{x}^{j} \\
& = -\eta \textstyle{\sum}_{k} \bs{e}^{k} {W^{T}}^{i}_{k} \mathrm{D} \bs{\sigma}^{i} \bs{x}^{j}
\end{split}
\end{equation}
%Fix this when we have time

where $\delta^{il}$ is the Kronecker delta and $\mathrm{D}\bs{\sigma}^{i}$ is the derivative of the $i$'th element of $\bs{\sigma}(A\bs{x})$ with respect to its argument, the $i$'th element of $A\bs{x}$.

Replacing $W^{T}$ by $W^{+}$ in the last line of equation \ref{eq:backprop}, we get what we will call the {\em pseudobackprop} adjustment:

\begin{equation}
\Delta A_{j~\mathrm{PBP}}^{i} = -\eta \textstyle{\sum}_{k} \bs{e}^{k} {W^{+}}_{k}^{i} \mathrm{D} \bs{\sigma}^{i} \bs{x}^{j}
\end{equation}

This adjustment yields a change in $\bs{h}$ that approximates the Gauss-Newton one, $\Delta \bs{h}_{\mathrm{GN}}$ from equation \ref{eq:gn}.  To see this, compute the first order approximation to the change in $\bs{h}$,

\begin{align}
\Delta \bs{h}^{i}_{\mathrm{PBP}} &= \mathrm{D} \bs{\sigma}^{i} \textstyle{\sum}_{j} \Delta A^{i}_{j~\mathrm{PBP}} \bs{x}^{j} + o\left((x^j)^2)\right) \nonumber \\
& \approx -\eta \mathrm{D} \bs{\sigma}^{i} \textstyle{\sum}_{j} \textstyle{\sum}_{k} \bs{e}^{k} {W^{+}}_{k}^{j} \mathrm{D} \bs{\sigma}^{i} \bs{x}^{j}\bs{x}^{j}  \\
& = -\eta ( \mathrm{D}\bs{\sigma}^{i} )^{2} \textstyle{\sum}_{j} ( \bs{x}^{j} )^{2}
\textstyle{\sum}_{k} \bs{e}^{k} {W^{+}}_{k}^{i} \\
& = \eta ( \mathrm{D} \bs{\sigma}^{i} )^{2} \textstyle{\sum}_{j} ( \bs{x}^{j} )^{2} 
\Delta \bs{h}^{i}_{\mathrm{GN}}
\end{align}

That is, each element of the pseudobackprop (PBP) alteration to $\bs{h}$ approximates the Gauss-Newton adjustment times a positive number.  
And that positive number is $1$ if we choose $\eta = 1 / (\mathrm{D}\bs{\sigma}^{i})^{2} \bs{x}^{T} \bs{x}$.  
If instead we want a constant $\eta$ then we can choose one that keeps $\Delta \bs{h}^{i}_{\mathrm{PBP}} \leq \Delta \bs{h}^{i}_{\mathrm{GN}}$, so we don't step too far.

\par

In the context of training an artificial network pseudobackprop may be of little interest.
The pseudoinverse is expensive to compute, and computational resources can be better spent either by simply taking more steps using the transpose matrix, or by using other, more efficient, second order methods.
Pseudobackprop does, however, bear upon the results of the main text. 
We find experimentally that the alignment algorithm encourages $W$ to act like $B^{+}$, so that $B$ begins to act like $W^{+}$ on the error vectors.
Thus alignment may be understood as an approximate implimentation of psuedobackprop.

\par

\section*{Proof \#3: $B$ acts like the pseudoinverse of $W$}

\par

Here we will prove that, under fairly restrictive conditions, feedback alignment prescribes hidden unit updates which are in the same direction as those prescriped by psuedobackprop, i.e. $\Delta \vh_{\mathrm{FA}} \measuredangle \Delta \vh_{\mathrm{PBP}} = 0$.
Again we take a linear network which generates output $\vy$, from input $\vx$ according to 
\begin{align} 
\vh &= A \vx \\
\vy &= W \vh.
\end{align}
We consider the dynamics of the parameters for this network when it is trained on a single input-output pair, $(\vx, \targ)$, using the forward alignment algorithm.

\par

The dynamics of the network parameters under this training regime are
\begin{align}
W_{t+1} &= W_{t} + \Delta W_{t} \\
A_{t+1} &= A_{t} + \Delta A_{t},
\end{align}
with
\begin{align}
\Delta W &= \eta_W \ve \vh^T \\
\Delta A &= \eta_A B \ve \vx^T.
\end{align}
Here, as in proof \#1, $B$ is a random, fixed, matrix of full rank. $\eta_W$ and $\eta_A$ are small positive learning rates.

\par

Because we will only present the network with a single input, $\vx$, we have that 
\begin{align}
\vh_{t+1} &= A_{t+1} \vx \nonumber \\
&= (A_{t} + \Delta A_t ) \vx  \nonumber \\
&= \vh_t + \eta_A B \ve \vx^T \vx \\
&= \vh_t + \eta_{\vh} B \ve. \nonumber 
\end{align}
Here, $\eta_{\vh} = \vx^T \vx \eta_A$. For a judicious choice of $\eta_A$, namely $\eta_A = \eta_W / (\vx^T \vx)$, we have $\eta_{\vh} = \eta_W = \eta$. 
For this choice of $\eta_A$ it suffices to consider the simpler dynamics
\begin{align}
W_{t+1} &= W_{t} + \Delta W_{t} \label{eq:simpledynstart}\\
\vh_{t+1} &= \vh_{t} + \Delta \vh_{t}
\end{align}
with
\begin{align}
\Delta W &= \eta \ve \vh^T \\
\Delta \vh &= \eta B \ve.\label{eq:simpledynend}
\end{align}

\par

We now establish a lemma concerning these simplified dynamics.

\par

\begin{lemm} \label{lem:scalardynamics}
In the special case of $W$ and $A$ initialized to zero at every time step there is a scalar $s_h$ such that
\begin{align}
\vh &= s_h B \targ \label{eq:hcond}
\end{align}
and a scalar $s_w$ such that
\begin{align}
W &= s_w \targ (B \targ)^T \label{eq:wcond}.
\end{align}
\end{lemm}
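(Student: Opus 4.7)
The statement reduces to a clean induction on the time index $t$, using the simplified single-sample dynamics (equations \ref{eq:simpledynstart}--\ref{eq:simpledynend}). The plan is to prove by induction the slightly stronger invariant that, in addition to $\vh_t = s_{h,t}\,B\targ$ and $W_t = s_{w,t}\,\targ(B\targ)^T$, the error vector itself satisfies $\ve_t = c_t\,\targ$ for some scalar $c_t$. The error invariant is the real engine: once every $\ve_t$ is a scalar multiple of $\targ$, the update $\eta B\ve_t$ is automatically a scalar multiple of $B\targ$, and the outer-product update $\eta \ve_t \vh_t^T$ is automatically a scalar multiple of $\targ(B\targ)^T$.

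\textbf{Base case.} The initialization $W_0 = 0$, $A_0 = 0$ gives $\vh_0 = A_0 \vx = 0$, so taking $s_{h,0} = s_{w,0} = 0$ satisfies (\ref{eq:hcond}) and (\ref{eq:wcond}), and $\ve_0 = \targ - W_0\vh_0 = \targ$ is of the required form with $c_0 = 1$.

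\textbf{Inductive step.} Assume $\vh_t = s_{h,t}\,B\targ$ and $W_t = s_{w,t}\,\targ(B\targ)^T$. Then
\begin{align}
\vy_t &= W_t \vh_t = s_{w,t}\,s_{h,t}\,\|B\targ\|^2\,\targ,
\end{align}
so $\ve_t = \bigl(1 - s_{w,t}s_{h,t}\|B\targ\|^2\bigr)\targ =: c_t\,\targ$. Substituting into the update rules gives
\begin{align}
\vh_{t+1} &= \vh_t + \eta B\ve_t = (s_{h,t} + \eta c_t)\,B\targ, \\
W_{t+1} &= W_t + \eta \ve_t \vh_t^T = (s_{w,t} + \eta c_t s_{h,t})\,\targ(B\targ)^T,
\end{align}
which closes the induction with $s_{h,t+1} = s_{h,t} + \eta c_t$ and $s_{w,t+1} = s_{w,t} + \eta c_t s_{h,t}$.

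\textbf{Comments on difficulty.} There is essentially no obstacle here: the lemma is a structural observation, and the only thing one must notice is that the error stays collinear with $\targ$ throughout training. Once that rank-one invariant is isolated, the update rules manifestly preserve the ansatz, and explicit scalar recursions for $s_h$, $s_w$, and $c_t$ drop out for free. The mild subtlety is that the lemma as stated refers to ``$W$ and $A$ initialized to zero,'' but the dynamics have been reduced to updates on $(W,\vh)$; one should note at the outset that $A_0 = 0$ implies $\vh_0 = 0$, so the reduced dynamics inherit the zero initialization needed for the base case.
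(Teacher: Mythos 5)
Your proof is correct and follows essentially the same route as the paper's: an induction on the time step whose base case is the zero initialization and whose inductive step observes that the ansatz forces $\ve_t$ to be collinear with $\targ$, so both updates preserve the rank-one forms; your scalar recursions for $s_{h}$ and $s_{w}$ are exactly the paper's, with your $c_t$ playing the role of its $(1-s_y^t)$.
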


\par

\begin{proof}
In the first time step, when $\vh=0$ and $W=0$, the conditions \ref{eq:hcond} and \ref{eq:wcond} are trivially satisfied with $s_h=0$ and $s_w=0$. 
We note that when conditions \ref{eq:hcond} and \ref{eq:wcond} hold we have that
\begin{align}
\vy = W \vh = s_w s_h \targ (B \targ)^T (B \targ) = s_y \targ.
\end{align}
Here $s_y := s_w s_h (B \targ)^T (B \targ)$. 
Now,
\begin{align}
\ve = \targ - y = \targ - s_y \targ = (1- s_y) \targ. \label{eq:e}
\end{align}
Then
\begin{align} \label{eq:DeltaW}
\Delta W = \eta \ve \vh^T = \eta (1-s_y) s_h \targ (B \targ)^T
\end{align}
and
\begin{align} \label{eq:Deltah}
\Delta \vh = \eta B \ve = \eta (1-s_y) B \targ.
\end{align}
This yeilds
\begin{align}
s^{t+1}_h = s_h^t + \eta (1-s_y^t) \label{eq:shdyn}
\end{align}
and
\begin{align}
s^{t+1}_w = s_w^t + \eta (1-s_y^t) s_h^t. \label{eq:swdyn}
\end{align}
By induction we can conclude that equations \ref{eq:hcond} and \ref{eq:wcond} hold for every time step.
\end{proof}

\par

With this lemma we are now able to state and prove theorem \ref{thrm:psuedoalignment}.

\begin{thrm} \label{thrm:psuedoalignment}
Under the same conditions as Lemma \ref{lem:scalardynamics}, for the simplified dynamics described in equations \ref{eq:simpledynstart} through \ref{eq:simpledynend} we have that the hidden unit updates prescribed by the the forward alignment algorith, $\Delta_{\mathrm{FA}} \vh$, are always a positive scalar multiple of the hidden unit updates prescribed by the pseudobackprop algorithm, $\Delta_{\mathrm{PBP}} \vh$. That is
\begin{align}
\Delta_{\mathrm{FA}} \vh = s \Delta_{\mathrm{PBP}} \vh
\end{align}
where $s$ is a positive scalar.
\end{thrm}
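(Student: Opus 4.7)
The heavy lifting is already done by Lemma~\ref{lem:scalardynamics}, which hands us a rank-one parametrisation of $W$ and one-dimensional descriptions of $\vh$ and $\ve$. My plan is: (i) write $W^+$ in closed form using the standard pseudoinverse formula for a rank-one outer product; (ii) apply both update rules to the parametrised $\ve$ and observe that each update is a scalar multiple of the single vector $B\targ$; and (iii) verify that the ratio of those scalars is positive.

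For step (i), Lemma~\ref{lem:scalardynamics} gives $W = s_w\,\targ(B\targ)^T$, so assuming $s_w \neq 0$ the standard rank-one identity $(\mb{u}\mb{v}^T)^+ = \mb{v}\mb{u}^T/(\|\mb{u}\|^2\|\mb{v}\|^2)$ yields
\begin{equation*}
W^+ \;=\; \frac{1}{s_w\,\|\targ\|^2\,\|B\targ\|^2}\,B\targ\,\targ^T .
\end{equation*}
For step (ii), equation~\ref{eq:e} gives $\ve = (1-s_y)\targ$, and hence
\begin{equation*}
\Delta\vh_{\mathrm{FA}} = \eta\,B\ve = \eta(1-s_y)\,B\targ, \qquad \Delta\vh_{\mathrm{PBP}} = W^+\ve = \frac{1-s_y}{s_w\,\|B\targ\|^2}\,B\targ.
\end{equation*}
Both updates are proportional to the single vector $B\targ$, so dividing immediately gives $\Delta\vh_{\mathrm{FA}} = s\,\Delta\vh_{\mathrm{PBP}}$ with $s = \eta\,s_w\,\|B\targ\|^2$.

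For step (iii) we need $s > 0$. Since $B$ has full column rank and $\targ \neq 0$ we have $\|B\targ\|^2 > 0$, and $\eta > 0$ by assumption, so positivity reduces to $s_w > 0$. Unwinding the scalar recursions~\ref{eq:shdyn} and~\ref{eq:swdyn} from $s_h^0 = s_w^0 = 0$ gives $s_h^1 = \eta > 0$ and then $s_w^{t+1} - s_w^t = \eta(1-s_y^t)s_h^t$, which is strictly positive for every $t \geq 1$ provided $s_y^t < 1$. I expect the main obstacle to lie exactly here: the three quantities $(s_h^t, s_w^t, s_y^t)$ are coupled through $s_y = s_w s_h \|B\targ\|^2$, and one must rule out overshoot ($s_y^t \geq 1$). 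Intuitively this holds for any sufficiently small $\eta$ because feedback alignment contracts the error $|1-s_y|$, but making that argument quantitative — or else restricting to $t \geq 2$ with $\eta$ in a safe range where $W^+$ is well-defined and the induction $s_y^t \in [0,1)$ closes — is the one delicate piece beyond the rank-one linear algebra.
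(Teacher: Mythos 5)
Your proposal is correct and follows essentially the same route as the paper: invoke Lemma~\ref{lem:scalardynamics} to get the rank-one form $W = s_w\,\targ(B\targ)^T$ and $\ve=(1-s_y)\targ$, compute $W^+$, and observe that both updates are scalar multiples of the single vector $B\targ$. The only cosmetic difference is in step (i): you apply the closed-form rank-one identity $(\mb{u}\mb{v}^T)^+=\mb{v}\mb{u}^T/(\|\mb{u}\|^2\|\mb{v}\|^2)$ directly, whereas the paper reaches the same place by chaining the product rule for pseudoinverses, $\left(\targ(B\targ)^T\right)^+=(B\targ)^{T+}\targ^{+}$, and simplifying (your version is arguably cleaner, and avoids the paper's slip in the final constant, where $s$ should involve $1/\left((B\targ)^T(B\targ)\right)$ rather than $(B\targ)^T(B\targ)$ --- harmless since only positivity matters). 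On the ``delicate piece'' you flag: you are not missing a trick that the paper supplies. The paper simply asserts ``with $s_w$ a positive scalar, and $(1-s_y)$ a positive scalar'' as if it followed from Lemma~\ref{lem:scalardynamics}, but that lemma only establishes existence of the scalars, not their signs; no overshoot argument is given. Indeed, unwinding the recursions as you do shows $s_w^0=s_w^1=0$, so $W=0$ and $\Delta\vh_{\mathrm{PBP}}=0$ for the first two steps while $\Delta\vh_{\mathrm{FA}}\neq 0$, meaning the theorem's ``always'' really needs the restriction to $t\ge 2$ (plus a small-$\eta$ condition to keep $s_y^t<1$) that you propose. Your write-up is, if anything, more honest about the hypotheses under which the conclusion actually holds.
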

\begin{proof}
By lemma \ref{lem:scalardynamics} we have that $W = s_w \targ (B \targ)^T$, with $s_w$ a positive scalar, and that $\ve = (1- s_y) \targ$, with $(1-s_y$) a positive scalar.
Thus, since $\Delta_{\mathrm{FA}} \vh = \eta (1-s_y) B \targ$ (equation \ref{eq:Deltah}) and $\Delta_{\mathrm{FA}} \vh = \eta (1-s_y) W^+ \targ$ it suffices to show that
\begin{align}\label{eq:invtranspose}
s B \targ = \left( \targ (B \targ)^T \right)^+ \targ,
\end{align}
with $s$ a positive scalar.
We show this by manipulating the left hand side of equation \ref{eq:invtranspose}.
\begin{align}
\left( \targ (B \targ)^T \right)^+ \targ &= \left( B \targ \right)^{T+} \targ^+ \targ \nonumber \\
&= B^{T+} \targ^{T+} \targ^+ \targ \nonumber \\
&= B^{T+} \targ^{T+} \targ^T \targ^{T+} \targ \\
&= B^{T+} \targ^{T+} \nonumber \\
&= \left( B \targ \right)^{T+} \nonumber \\
&= s B \targ \nonumber 
\end{align}
Here $s = \left( B \targ \right)^T \left( B \targ \right)$.
\end{proof}

\end{document}